%
%
%


\documentclass{amsart}
\usepackage[letterpaper,hmargin=1.1in,vmargin=1.57in]{geometry}



\usepackage{amsmath, amsthm, amssymb}
\usepackage{enumerate}


\newtheorem{theorem}{Theorem}[section]
\newtheorem{lemma}[theorem]{Lemma}
\newtheorem{algorithm}[theorem]{Algorithm}
\newtheorem*{algorithmoutline1}{Drinfeld Module Analogue of the Black-box Berlekamp Algorithm}

\theoremstyle{definition}
\newtheorem{definition}[theorem]{Definition}
\newtheorem{example}[theorem]{Example}

\newtheorem{conjecture}[theorem]{Conjecture}
\newtheorem{corollary}[theorem]{Corollary}

\theoremstyle{remark}
\newtheorem{remark}[theorem]{Remark}

\numberwithin{equation}{section}

\newcommand{\Z}{\mathbb Z}
\newcommand{\F}{\mathbb F}

\newcommand{\A}{\mathbb A}

\newcommand{\D}{\Delta}

\newcommand{\p}{\mathfrak p}
\newcommand{\h}{\mathfrak h}
\newcommand{\f}{\mathfrak f}

\newcommand{\B}{\mathfrak B}
\newcommand{\ef}{\mathcal F}

\usepackage{tikz}
\usetikzlibrary{matrix,arrows}

\newcommand{\ph}{(\phi/\p)}

\newcommand{\g}{\mathfrak g}

\title[Polynomial Factorization And Euler-Poincare Characteristics of Drinfeld Modules]{Polynomial Factorization over Finite Fields By Computing Euler-Poincare Characteristics of Drinfeld Modules}
\author[Anand Kumar Narayanan]{Anand Kumar Narayanan\\~\\\MakeLowercase{anandkn@caltech.edu}} 
\thanks{The author was partially supported by NSF grant CCF 1423544}
\address{Anand Kumar Narayanan, Department of Computing and Mathematical Sciences, California Institute of Technology, Pasadena, California, USA, 91125}
\email{anandkn@caltech.edu}

\begin{document}
\maketitle
\begin{abstract}
We propose and rigorously analyze two randomized algorithms to factor univariate polynomials over finite fields using rank $2$ Drinfeld modules. The first algorithm estimates the degree of an irreducible factor of a polynomial from Euler-Poincare characteristics of random Drinfeld modules. Knowledge of a factor degree allows one to rapidly extract all factors of that degree. As a consequence, the problem of factoring polynomials over finite fields in time nearly linear in the degree is reduced to finding Euler-Poincare characteristics of random Drinfeld modules with high probability.
The second algorithm is a random Drinfeld module analogue of Berlekamp's algorithm. During the course of its analysis, we prove a new bound on degree distributions in factorization patterns of polynomials over finite fields in certain short intervals.
\end{abstract}
\section{Introduction}
\subsection{Current State of the Art}
Let $\F_q$ denote the finite field with $q$ elements and $\F_q[t]$ the polynomial ring in one indeterminate. The fastest known randomized algorithm for factorization in $\F_q[t]$ is the Kaltofen-Shoup algorithm \cite[\S~2]{ks} implemented by Kedlaya-Umans fast modular composition \cite{ku}. It belongs in the Cantor-Zassenhaus \cite{cz} framework and to factor a polynomial of degree $n$ takes $\widetilde{\mathcal{O}}(n^{3/2} \log q + n \log^2 q)$ expected time by employing the following sequence of steps\footnote{The soft $\widetilde{\mathcal{O}}$ notation suppresses $n^{o(1)}$ and $\log^{o(1)} q$ terms for ease of exposition.}. The first is square free factorization where the polynomial in question is written as a product of square free polynomials. A square free polynomial is one that does not contain a square of an irreducible polynomial as a factor. The second step known as distinct degree factorization takes a monic square free polynomial and decomposes it into factors each of which is a product of irreducible polynomials of the same degree. The final step is equal degree factorization which splits a polynomial all of whose irreducible factors are of the same degree into irreducible factors. The bottleneck is distinct degree factorization and currently the difficulty appears to be in finding the factor degrees. Given the degree of a factor, one can extract all factors of that degree in $\widetilde{\mathcal{O}}( n \log^2 q)$ expected time \cite{ku}.
\subsection{Polynomial Factorization and Euler-Poincare Characteristic of Drinfeld Modules}
We propose a novel procedure to read off the smallest factor degree of a monic square free polynomial $h \in \F_q[t]$ from the Euler-Poincare characteristic $\chi_{\phi,h}$ of a random rank 2 Drinfeld module $\phi$ reduced at $h$. The reduction of $\phi$ at $h$ is a finite $\F_q[t]$-module and its Euler-Poincare characteristic $\chi_{\phi,h} \in \F_q[t]$ (see Definition \ref{euler}) is an $\F_q[t]$ valued cardinality measure. 
If $h$ factors into monic irreducible factors as $h = \prod_{i} p_i$, then $\chi_{\phi,h} = \prod_i \chi_{\phi,p_i}$. A Drinfeld module analogue of Hasse's theorem for elliptic curves, due to Gekeler \cite{gek1}, asserts for each $p_i$ that $\chi_{\phi,p_i} = p_i + c_{\phi,i}$ for some $c_{\phi,i} \in \F_q[t]$ with $\deg(c_{\phi,i}) \leq \deg(p_i)/2$. Hence, the leading coefficients of $h$ and $\chi_{\phi,h}$ agree and the number of agreements reveals information about the degree of the smallest degree factor of $h$.  In \S~\ref{degree_guessing_section}, we prove,  
\begin{theorem}\label{chi_reduction_theorem}
For $n \leq \sqrt{q}/2$, the smallest factor degree of a degree $n$ square free $h \in \F_q[t]$ can be inferred in $O(n \log q)$ time from $\chi_{\phi,h}$ with probability at least $1/4$ for a randomly chosen $\phi$.
\end{theorem}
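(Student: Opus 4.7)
The plan is to read off the smallest factor degree $d$ from the number of leading coefficients on which $h$ and $\chi_{\phi,h}$ agree, and to show that this count lands on the deterministic lower bound $\lceil d/2\rceil$ with probability at least $1/4$.

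Write $h=\prod_{i=1}^r p_i$ with $d_i=\deg(p_i)$ and $d=\min_i d_i$. Since $h$ and every $p_i$ are monic, each quotient $h/p_i$ is monic of degree $n-d_i$. Expanding
\[
\chi_{\phi,h}-h\;=\;\prod_i(p_i+c_{\phi,i})-\prod_i p_i
\]
using $\deg(c_{\phi,i})\le d_i/2$, every single-$c$ term has degree at most $\lfloor d_i/2\rfloor+n-d_i=n-\lceil d_i/2\rceil$, and every term involving two or more $c$'s has degree at most $n-2\lceil d/2\rceil$, which is strictly smaller provided $d\ge 1$. Hence $\deg(\chi_{\phi,h}-h)\le n-\lceil d/2\rceil$ unconditionally, so $h$ and $\chi_{\phi,h}$ already agree in at least $\lceil d/2\rceil$ leading coefficients.

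The substance of the proof is to force equality with good probability. The coefficient of $t^{n-\lceil d/2\rceil}$ in $\chi_{\phi,h}-h$ simplifies, using the monicity of $h/p_i$ and the degree accounting above, to
\[
\sum_{i\in I}\gamma_i,\qquad I=\{i:\lceil d_i/2\rceil=\lceil d/2\rceil\},\qquad \gamma_i=[t^{\lfloor d_i/2\rfloor}]\,c_{\phi,i}\in\F_q.
\]
Sample $\phi$ by choosing its defining parameters uniformly modulo $h$; the Chinese Remainder Theorem then makes the reductions $\phi$ modulo distinct $p_i$ statistically independent, so $(\gamma_i)_{i\in I}$ are independent random variables as $\phi$ varies. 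Invoking a quantitative equidistribution theorem (due to Gekeler, or established directly in a companion section) for the trace $c_{\phi,p}$ as $\phi$ ranges over rank 2 Drinfeld modules modulo an irreducible $p$ of degree $d_i$, one shows each $\gamma_i$ is close enough to uniform on $\F_q$ that, combining via independence across $I$, $\sum_{i\in I}\gamma_i$ is nonzero with probability at least $1/4$. The hypothesis $n\le\sqrt{q}/2$ is precisely what makes this equidistribution error negligible at the resolution of a single $\F_q$-coefficient, since the Hasse disk has radius comfortably below $\sqrt{q}$. Once the number of matching leading coefficients is known to equal $\lceil d/2\rceil$, the underlying $d$ is recovered (modulo a trivial parity check between $d=2\lceil d/2\rceil$ and $d=2\lceil d/2\rceil-1$) in $O(n\log q)$ time, which is the cost of comparing two degree-$n$ polynomials coefficient by coefficient.

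The main obstacle is the quantitative equidistribution input: Gekeler's original description of the trace distribution is qualitative and class-number weighted, so one needs either a Sato--Tate style bound or a direct character-sum argument to convert it into a statement about a single $\F_q$-valued coefficient that is strong enough to survive being summed over $|I|$ terms. The other ingredients — CRT independence, cross-term degree bookkeeping, and coefficient comparison — are routine.
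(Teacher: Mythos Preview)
Your proposal follows essentially the same route as the paper: the proof of Lemma~\ref{alg_0_proof} performs exactly your degree bookkeeping on $\chi_{\phi,h}-h$, reduces to the nonvanishing of the leading-coefficient sum $\sum_i \gamma_i$, invokes CRT independence across the primes $\p_i$, and finishes via the equidistribution input (Lemma~\ref{set_bound2}, derived from Gekeler's weighted class-number formula together with a Weil bound on $L(1,\xi)$).

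One technical point where the paper is more concrete than your sketch: rather than arguing that each $\gamma_i$ is ``close to uniform on $\F_q$'' and then combining, the paper fixes a single index $j\in I$ and uses
\[
\Pr\Bigl(\textstyle\sum_{i\in I}\gamma_i \neq 0\Bigr)\;\ge\;\min_{\theta\in\F_q}\Pr(\gamma_j\neq\theta),
\]
so that only a one-sided lower bound on $\Pr(\gamma_j\neq\theta)$ is needed. This matters because Lemma~\ref{set_bound2} (from Gekeler) supplies only a lower bound on the number of Drinfeld modules with $(a_{\phi,\p},\epsilon_{\phi,\p})$ in a prescribed set, not a two-sided estimate; your ``close to uniform'' phrasing would appear to require the missing upper bound as well. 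With the min-over-$\theta$ trick in hand, the rest of your outline matches the paper. (Incidentally, your index set $I=\{i:\lceil d_i/2\rceil=\lceil d/2\rceil\}$ is slightly more careful than the paper's $\{i:d_i=d\}$, since when $d$ is odd factors of degree $d+1$ also contribute at degree $n-\lceil d/2\rceil$; this has no effect on the argument once one conditions on all-but-one.)
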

\noindent Consider an algorithm $\mathcal{B}$ that takes a square free $h \in \F_q[t]$ and a random Drinfeld module $\phi$ as inputs such that with constant probability the output $\mathcal{B}(h,\phi)$ is $\chi_{\phi,h}$. That is, $B$ is a montecarlo algorithm to compute Euler-Poincare characteristics. By choosing $\phi$ at random and invoking Theorem \ref{chi_reduction_theorem}, we establish that a non trivial factor can be found in nearly linear time with oracle access  to $\mathcal{B}$.
\begin{corollary}\label{chi_factor_corollary}
\textit{There exists an $\widetilde{\mathcal{O}}(n  \log^2 q)$ expected time algorithm (with oracle access to $\mathcal{B}$) to find an irreducible factor of a square free polynomial $h \in \F_q[t]$ of degree $n \leq \sqrt{q}/2$ with only $O(1)$ queries to $\mathcal{B}$, each of the form $\mathcal{B}(h)$.}
\end{corollary}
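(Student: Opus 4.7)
The plan is to chain Theorem \ref{chi_reduction_theorem} together with the fast equal-degree factorization of Kedlaya-Umans \cite{ku} already invoked in \S~1.1, using $\mathcal{B}$ purely as a black box producing $\chi_{\phi,h}$. Given a square free $h \in \F_q[t]$ of degree $n \leq \sqrt{q}/2$, the first step is to sample a random rank $2$ Drinfeld module $\phi$ and to query $\mathcal{B}(h,\phi)$, obtaining a candidate $\widetilde{\chi}$ that by hypothesis equals $\chi_{\phi,h}$ with some constant probability. The second step feeds $\widetilde{\chi}$ into the $O(n \log q)$-time degree-inference routine promised by Theorem \ref{chi_reduction_theorem}; conditional on $\widetilde{\chi} = \chi_{\phi,h}$, this returns the smallest factor degree $d$ of $h$ with probability at least $1/4$. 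Combining the two, a single attempt uses exactly one oracle query and produces a correct $d$ with some constant probability $\delta > 0$.

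With a candidate $d$ in hand I would compute $g := \gcd(h,\, t^{q^d} - t)$ using Kedlaya-Umans modular composition in $\widetilde{\mathcal{O}}(n \log^2 q)$ time. When $d$ really is the smallest factor degree of $h$, $g$ is precisely the product of the degree-$d$ irreducible factors of $h$, and one round of Cantor-Zassenhaus equal-degree splitting on $g$ extracts an irreducible factor within the same time budget. An incorrect guess for $d$ is caught cheaply, either because $g = 1$ or because a trial division reveals that the purported factor does not divide $h$, and the entire procedure is restarted with fresh randomness.

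Since each attempt succeeds with probability at least $\delta$, costs $\widetilde{\mathcal{O}}(n \log^2 q)$ time, and consumes a single query of the form $\mathcal{B}(h,\phi)$, a standard geometric-tail argument will yield $\widetilde{\mathcal{O}}(n \log^2 q)$ expected total running time and $O(1)$ expected oracle queries, matching the statement. I do not anticipate a serious technical obstacle; the delicate piece is just verifying that failure detection for a wrong $d$ fits inside the per-attempt budget, but this is immediate since a gcd computation and a divisibility check both cost $\widetilde{\mathcal{O}}(n \log q)$. The heavy lifting, namely producing $\chi_{\phi,h}$, is outsourced to $\mathcal{B}$, and everything else assembles from standard fast modular arithmetic.
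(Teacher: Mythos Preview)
Your proposal is correct and follows essentially the same route as the paper: the paper's proof of Corollary \ref{chi_factor_corollary} simply says to let $\mathcal{B}$ perform steps (1) and (3) of Algorithm \ref{alg0}, invoke Theorem \ref{chi_reduction_theorem} to get (a candidate for) the smallest factor degree, and then extract a factor via $\gcd(t^{q^{s_h}}-t,h)$ followed by equal-degree splitting. Your version spells out the failure detection and the geometric-tail bookkeeping more carefully, but the underlying argument is the same.
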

\noindent The requirement $\sqrt{q} \geq 2n$ in Theorem \ref{chi_reduction_theorem} and Corollary \ref{chi_factor_corollary} is without loss of generality since if $q$ were smaller, we could choose a small power $q^\prime$ of $q$ that satisfies $\sqrt{q^\prime} \geq 2n$ and obtain the factorization over $\F_{q^\prime}[t]$ with the running time unchanged up to polylogarithmic factors in $n$ (see Remark \ref{large_q}).\\ \\
Given oracle access to $\mathcal{B}$, obtaining the complete factorization by naively extracting one factor at a time using Corollary \ref{chi_factor_corollary} leads to a $3/2$ running time exponent on $n$. For instance, a polynomial of degree $n$ with one irreducible factor each of degree $1,2,3,\ldots,m-1,m$ where $m = \Theta(n^{1/2})$ requires $\Theta(n^{1/2})$ extractions.  In \cite{gnu}, an algorithm to quickly obtain the complete factorization given a procedure to extract a non trivial factor as a subroutine is described. As a consequence, an algorithm for implementing $\mathcal{B}$ with exponent less than $3/2$ would lead to a polynomial factorization algorithm with exponent less than $3/2$. As an illustration, in \S~\ref{complete_factorization}, we obtain the following corollary of Theorem \ref{chi_reduction_theorem}  describing an implication of a nearly linear time algorithm for computing Euler-Poincare characteristic of random Drinfeld modules.
\begin{corollary}\label{chi_reduction_corollary}
\textit{An implementation of the oracle function $\mathcal{B}$ that takes $\widetilde{\mathcal{O}}(n \log^{O(1)} q)$ expected time for inputs of degree $n$ yields an $\widetilde{\mathcal{O}}(n^{4/3} \log^{O(1)} q)$ expected time polynomial factorization algorithm.}
\end{corollary}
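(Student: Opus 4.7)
The plan is a two-stage reduction: first use the hypothetical oracle $\mathcal{B}$ inside Corollary \ref{chi_factor_corollary} to obtain a nearly linear time single-factor extraction subroutine, and then feed that subroutine into the non-trivial-factor-to-complete-factorization reduction of \cite{gnu}.

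For the first stage, Corollary \ref{chi_factor_corollary} already costs $\widetilde{\mathcal{O}}(n \log^2 q)$ outside the $O(1)$ calls to $\mathcal{B}$, so substituting the hypothesized $\widetilde{\mathcal{O}}(n \log^{O(1)} q)$ expected cost per call yields a Monte Carlo procedure that extracts an irreducible factor of any square free $h \in \F_q[t]$ of degree $n \leq \sqrt{q}/2$ in $\widetilde{\mathcal{O}}(n \log^{O(1)} q)$ expected time. A standard boosting step (repeat $O(\log n)$ times and verify each candidate by trial division) raises the success probability to $1 - 1/n^{\Omega(1)}$ at the cost of an extra $\log n$ factor that is absorbed by $\widetilde{\mathcal{O}}$. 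This is precisely the primitive consumed by \cite{gnu}, which converts a primitive of cost $\widetilde{\mathcal{O}}(n \log^{O(1)} q)$ into a complete factorization algorithm of expected cost $\widetilde{\mathcal{O}}(n^{4/3} \log^{O(1)} q)$; invoking it as a black box delivers the target bound.

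To remove the side condition $n \leq \sqrt{q}/2$, I would follow Remark \ref{large_q} and work over the smallest extension $\F_{q^e}$ of $\F_q$ with $\sqrt{q^e} \geq 2n$; here $e = O(\log_q n)$, so $\log q^e = O(\log q + \log n)$ is absorbed into the $\log^{O(1)} q$ and $\widetilde{\mathcal{O}}$ factors. Factoring over $\F_{q^e}[t]$ by the pipeline above and then gluing Frobenius-conjugate factors recovers the factorization over $\F_q[t]$ within the same $\widetilde{\mathcal{O}}(n^{4/3} \log^{O(1)} q)$ budget. The main obstacle is purely the appeal to \cite{gnu}: the improvement from the naive $n^{3/2}$ exponent to $n^{4/3}$ is supplied entirely by that reduction, not by anything intrinsic to the proof of this corollary, so the substantive content reduces to matching the interface of our extraction primitive against the hypothesis of the \cite{gnu} theorem.
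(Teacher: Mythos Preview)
Your proposal is correct, but it routes through \cite{gnu} as an opaque black box, whereas the paper gives the $4/3$ reduction explicitly in \S\ref{complete_factorization}. The paper's argument is a simple threshold balancing: first use Kaltofen--Shoup with Kedlaya--Umans modular composition (\cite[Lem.~8.4, Thm.~8.5]{ku}) to strip off all irreducible factors of degree at most $n^{2/3}$ in $\widetilde{\mathcal{O}}(n^{4/3}\log^{O(1)}q)$ expected time; what remains has every irreducible factor of degree exceeding $n^{2/3}$, hence at most $n^{1/3}$ factors, and pulling them out one at a time with the nearly linear extractor from Corollary~\ref{chi_factor_corollary} costs another $n^{1/3}\cdot\widetilde{\mathcal{O}}(n\log^{O(1)}q)=\widetilde{\mathcal{O}}(n^{4/3}\log^{O(1)}q)$. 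Your version is logically fine but, as you yourself note, outsources the only nontrivial content to \cite{gnu} (which is listed as ``in preparation''); the paper's explicit threshold argument is self-contained, makes the origin of the $4/3$ exponent transparent, and avoids depending on an unpublished reference. Your handling of the $n\le\sqrt{q}/2$ constraint via Remark~\ref{large_q} matches the paper.
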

\noindent We describe our first complete 
polynomial factorization algorithm by presenting an implementation of $\mathcal{B}$. Assuming the matrix multiplication exponent $\omega$ is $2$, our implementation of $\mathcal{B}$ and hence the polynomial factorization algorithm both have running time exponent $3/2$ in the input degree. A faster implementation of $\mathcal{B}$ would break the $3/2$ exponent barrier in polynomial factorization. Thus, the problem of computing Euler-Poincare characteristics of random Drinfeld modules warrants a thorough investigation. The problem is analogous to point counting on elliptic curves over finite fields and the question as to if there is a Drinfeld module analogue of Schoof's algorithm \cite{sch} is immediate.\\ \\
We next briefly sketch our implementation of $\mathcal{B}$. The obvious procedure to compute $\chi_{\phi,h}$ for a given a square free $h \in A$ and a rank $2$ Drinfeld module $\phi$, is to compute the characteristic polynomial of the ($\F_q$ linear) $\phi$ action on $\F_q[t]/(h)$. However, the complexity of such generic linear algebraic techniques is equivalent to inverting square matrices of dimension $\deg(h)$. 
We devise a faster implementation of $\mathcal{B}$ by exploiting the fact that the input $\phi$ to $\mathcal{B}$ is chosen at random. For $q \geq 2\deg(h)^4$, which we may assume without loss of generality, we prove it is likely that the reduction of a random $\phi$ at $h$ is a cyclic $\F_q[t]$-module and further that $\chi_{\phi,h}$ coincides with the order (that is, the monic generator of the annihilator) of a random element in $\phi$ reduced at $h$. Implementing $\mathcal{B}$ is thus reduced to the finding the order of a random element in a random Drinfeld module $\phi$ reduced at $h$ with constant probability. 
In \S~\ref{order_computation}, we solve the order finding problem by posing it as an instance of the automorphism projection problem of Kaltofen-Shoup \cite{ks} and thereby obtain an $\widetilde{\mathcal{O}}(n^{(1+\omega)/2} \log q + n \log^2 q)$ expected time algorithm to find a factor. In \S~\ref{complete_factorization}, we describe how to obtain the complete factorization.
\subsection{Drinfeld Module Analog of Berlekemp's Algorithm}
Our second algorithm is a randomized Drinfeld analogue of Berlekamp's algorithm \cite{ber} wherein $\F_q[t]$-modules twisted by the Frobenius action is replaced with reductions of random rank-2 Drinfeld modules. It has the distinction of being the only  polynomial factorization algorithm over finite fields that does not involve a quadratic residuosity like map. To factor $h \in \F_q[t]$, Berlekamp's algorithm proceeds by finding a basis for the  Berlekamp subalgebra, which is the fixed space of the $q^{th}$ power Frobenius $\tau$ acting on $\F_q[t]/(h)$. Then (as in the Cantor-Zassenhaus \cite{cz} variant) a random element $\beta$ in the Berlekamp subalgebra is generated by taking a random $\F_q$ linear combination of the basis elements. When $q$ is odd, with probability at least half, the greatest common divisor of $h$ and a lift of $\beta^{(q-1)/2}-1$ is a non trivial factor of $h$. In place of the $\F_q[t]$-module $\F_q[t]/(h)$ and its Berlekamp subalgebra (which is the $\tau-1$ torsion in $\F_q[t]/(h)$), our second algorithm works over a random Drinfeld module reduced at $h$ and its torsion corresponding to low degree polynomials in the Drinfeld action. These low degree polynomials are precisely the low degree factors of $\chi_{\phi,h}$. In terms of implementation, the algorithm closely resembles the fast black-box algorithm of Kaltofen-Shoup \cite[\S~3]{ks} and shares its $\widetilde{\mathcal{O}}(n^{(1+\omega)/2+o(1)} \log q + n \log^2 q)$ expected running time.\\ \\
Our analysis of the Drinfeld analogue of Berlekamp's algorithm relies on bounds on the degree distribution of factorization patterns of polynomials in short intervals in $\F_q[t]$. We prove the required bounds, but only when $q$ is very large compared to $n$. When $q$ is not large enough, the claimed running times hold under a widely believed conjecture (see Conjecture \ref{smoothness_conjecture}) and a slower running time bound is proven unconditionally (see Remark \ref{rem2}). We next state the bounds for large $q$ since they might be of independent interest.
\subsection{Factorization Patterns of Polynomials in Short Intervals}
For a partition $\lambda$ of a positive integer $d$, let $P(\lambda)$ denote the fraction of permutations on $d$ letters whose cycle decompositions correspond to $\lambda$. When $q$ is large enough compared to $d$, a random polynomial in $\F_q[t]$ of degree $d$ has a factorization pattern corresponding to a partition $\lambda$ of $d$ with probability about $P(\lambda)$ \cite{coh}. In  \S~\ref{factor_distribution}, we prove that the degree distribution of a random polynomial in the interval $\mathcal{I}_{f,m}:=\{f+a | a\in \F_q[t], \deg(a) \leq m\}$ around $f \in \F_q[t]$ is not far from the degree distribution of a random polynomial of degree $d$.
\begin{theorem}\label{small_interval_bound_intro}  
For every $f \in \F_q[t]$ of degree $d$ bounded by $\log q \geq 3 d \log d$, for every $m \geq 2$ and for every partition $\lambda$ of $d$, 
$$ \left(1-\frac{1}{\sqrt{q}}\right) P(\lambda) \leq  \frac{\left|\{g \in \mathcal{I}_{f,m} |  \lambda_{g} = \lambda\}\right|}{|\mathcal{I}_{f,m}|} \leq \left(1+\frac{1}{\sqrt{q}}\right) P(\lambda)$$ 
where  $\lambda_g$ denotes the partition of $\deg(g)$ induced by the degrees of the irreducible factors of $g$.
\end{theorem}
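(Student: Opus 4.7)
The plan is to identify $\mathcal{I}_{f,m}$ with the $\F_q$-points of $\A^{m+1}_{\F_q}$ via $(a_0,\ldots,a_m)\mapsto f+\sum a_i t^i$, and to extract the factorization pattern $\lambda_g$ as the cycle type of the Frobenius conjugacy class in the Galois group of the generic fiber. Writing $F(t;A)=f(t)+\sum_{i=0}^{m}A_i t^i$ and $K=\F_q(A_0,\ldots,A_m)$, let $G^{\mathrm{geom}}\subseteq S_d$ be the Galois group of $F$ over $\overline{\F_q}\cdot K$. Once $G^{\mathrm{geom}}=S_d$ is proven, the arithmetic Galois group automatically equals $S_d$ (as $G/G^{\mathrm{geom}}\hookrightarrow\widehat{\Z}$ while $G\subseteq S_d$), and an effective function-field Chebotarev density theorem in the Deligne equidistribution style yields, for each partition $\lambda$ of $d$,
$$\#\{a\in\F_q^{m+1}:\lambda_{F(t;a)}=\lambda\}=P(\lambda)\,q^{m+1}+O\!\left(C(d)\,q^{m+1/2}\right),$$
with $C(d)=d^{O(d)}$ absorbing the degree of the discriminant hypersurface and the Betti numbers of the associated $\ell$-adic sheaf. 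The hypothesis $\log q\ge 3d\log d$, i.e.\ $q\ge d^{3d}$, forces $C(d)/\sqrt{q}\le 1$ with room to spare, so dividing by $|\mathcal{I}_{f,m}|=q^{m+1}$ gives the claimed two-sided bound with relative error $q^{-1/2}$.

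The crux is therefore the Galois computation, which I would carry out in three steps. \emph{Transitivity} is immediate because $F$ has degree one in $A_0$, hence is irreducible in $\F_q[A_0,\ldots,A_m,t]$ and, by Gauss's lemma, in $K[t]$. \emph{A transposition} in $G^{\mathrm{geom}}$ is produced by exhibiting a specialization $a\in\overline{\F_q}^{\,m+1}$ for which $F(t;a)$ has exactly one double root and $d-2$ simple roots; the inertia at the corresponding point of the discriminant locus is then a transposition, and it lies in $G^{\mathrm{geom}}$ by the standard specialization argument. Since $m\ge 2$, one may first fix $a_2,\ldots,a_m$ generically so that $f(t)+\sum_{i=2}^{m}a_i t^i$ is Morse (distinct critical points with distinct critical values), a Zariski-open, non-empty condition over $\overline{\F_q}$ under our hypothesis on $q$; then choosing $A_1$ and $A_0$ so that the shifted polynomial hits exactly one critical value produces the required double-root configuration. \emph{Primitivity} is the subtlest step, since a transitive subgroup of $S_d$ containing a transposition need not be primitive. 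I would establish $2$-transitivity by showing that the ``resolvent'' $\bigl(F(t;A)-F(s;A)\bigr)/(t-s)$, viewed as a polynomial in $t$ over $K(s)$ with $s$ a formal root of $F$, is irreducible; this can be verified by yet another specialization, again using the freedom afforded by $m\ge 2$. Jordan's classical theorem (a primitive subgroup of $S_d$ containing a transposition equals $S_d$) then yields $G^{\mathrm{geom}}=S_d$.

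The main obstacle I anticipate is this $2$-transitivity verification, which requires a genuine irreducibility argument at a second level and may interact subtly with small-characteristic pathologies in $f$ (for example, when $\mathrm{char}\,\F_q$ divides $d$, certain critical-point counts degenerate and the naive Morse perturbation must be supplemented by more careful choices among $a_2,\ldots,a_m$). Controlling the error constant $C(d)$ in Chebotarev is by contrast routine, via Deligne's equidistribution combined with standard bounds on $\ell$-adic Betti numbers in terms of the degree of the branch locus (at most $2(d-1)$); the very strong hypothesis $\log q\ge 3d\log d$ is precisely what absorbs all $d$-dependent constants into a factor of size at most $q^{-1/2}$, giving the stated sharp form of the theorem.
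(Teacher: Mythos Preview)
Your approach coincides with the paper's: establish that the geometric Galois group of the generic fiber is $S_d$, then apply an effective function-field Chebotarev density theorem backed by Lang--Weil type point-counting. The only differences are in execution: the paper outsources the Galois computation entirely to \cite[Prop.~3.6]{bbr} (which absorbs precisely the small-characteristic subtleties you anticipate in the primitivity step) and derives its own effective Chebotarev statement (Lemma~\ref{density}) from \cite[Prop.~2.2]{bs} combined with the explicit Lang--Weil bound of \cite{cm}, rather than invoking Deligne equidistribution. One quantitative remark: since $[F:K]=d!$, the Lang--Weil error is of order $(d!)^2/\sqrt{q}$, so the method yields relative error $O_d(1)/\sqrt{q}$ rather than a bare $1/\sqrt{q}$; your sentence ``$C(d)/\sqrt{q}\le 1$ \dots\ gives the claimed two-sided bound with relative error $q^{-1/2}$'' conflates these. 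Indeed the paper's own proof (Theorem~\ref{small_interval_bound}) only records relative error $1/2$ under the stronger hypothesis $\log q\ge 5d\log d$, so the sharper constants in the introductory statement should be read as indicative rather than literal.
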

\subsection{Related Work}
Our algorithms draw inspiration from Lenstra's elliptic curve integer factorization \cite{len} where the role of multiplicative groups modulo primes in Pollard's $p-1$ algorithm \cite{pol} was recast with the group of rational points on random Elliptic curves modulo primes. Our first algorithm for degree estimation using Euler-Poincare characteristic is a random Drinfeld module analogue of an algorithm described in the author's Ph.D thesis \cite[Chap 7]{nar} using Carlitz modules. To our knowledge, the use of Drinfeld modules to factor polynomials over finite fields originated with Panchishkin and Potemine \cite{pp} whose algorithm was rediscovered by van der Heiden \cite{vdH} (see also \cite{vdH1}). Our Drinfeld module analogue of Berlekamp's algorithm shares some similarities with the algorithm in \cite{vdH} but they differ in the following aspects. In contrast to our algorithm, the algorithm in \cite{vdH} only works for equal degree factorization and targets torsion corresponding to large degree polynomials to aid in the splitting; resulting in a slower running time. Further, its analysis was merely supported by heuristics. Using  Theorem \ref{small_interval_bound_intro} and Lemma \ref{set_bound2}, the proof of Lemma \ref{alg2_proof} can be extended to rigorously analyze the algorithm in \cite{vdH} for large $q$ when restricted to rank $2$ Drinfeld modules.
\subsection{Organization}
The analysis of our algorithms rely critically on the distribution of the characteristic polynomial (of the Frobenius endomorphism in the representation of the endomorphism ring) of a random rank $2$ Drinfeld module (on its $\ell$-adic Tate modules). In \S~\ref{drinfeld_notation}, we prove Lemma \ref{set_bound2}, the required equidistribution lemma on the aforementioned characteristic polynomial. We begin \S~\ref{drinfeld_notation} by recounting the theory of rank $2$ Drinfeld modules followed by \S~\ref{drinfeld_structure}, where the structure of rank $2$ Drinfeld modules in terms of its characteristic polynomial is described.  In \S~\ref{frobenius_distributions}, we state a weighted class number formula for isomorphism classes of Drinfeld modules with a given characteristic polynomial (due to Gekeler \cite{gek}) and from it derive Lemma \ref{set_bound2}. Gekeler's weighted class number formula is obtained through complex multiplication theory, that is, a correspondence between isomorphism classes of Drinfeld modules and Gauss class numbers in certain imaginary quadratic orders. While they hold in even characteristic, for ease of exposition, we assume in \S~\ref{frobenius_distributions} and consequently in the analysis of our algorithms that $q$ is odd. In \S~\ref{degree_guessing_section}, we state and analyze our first algorithm, namely to estimate factor degrees by computing Euler-Poincare characteristic and prove Theorem \ref{chi_reduction_theorem} and Corollaries \ref{chi_factor_corollary} and \ref{chi_reduction_corollary}. 
Theorem \ref{small_interval_bound_intro} concerning factorization patterns of polynomials in short intervals is proven in \S~\ref{factor_distribution}, which can be read independent of the rest of the paper. The Drinfeld analogue of Berlekamp's algorithm is described and analyzed in \S~\ref{blackbox_section}. \\ \\
A remarkable feature of our algorithms and reductions is that we address the factorization of a square free $h = \prod_i p_i \in \F_q[t]$ by looking at $\chi_{\phi,h} = \prod_i \chi_{\phi,p_i}$ for random $\phi$. We may view $\chi_{\phi,h}$ as a perturbation of $h$, since $\chi_{\phi,p_i}$ is the irreducible factor $p_i$ of $h$ perturbed within a half degree (of $p_i$) interval. These intervals widen with Drinfeld modules of increasing rank. Theorem \ref{small_interval_bound_intro} and Lemma \ref{set_bound2} together imply that even for rank $2$, the intervals are large enough for $\chi_{\phi,p_i}$ to exhibit random factorization patterns. This allowed us to pose the worst case complexity of polynomial factorization in terms of an average complexity statement (Corollary \ref{chi_reduction_corollary}) concerning Drinfeld modules. It would be interesting to see if this perturbation leads to worst case to average case reductions for other multiplicative problems.
\section{Finite Rank-2 Drinfeld Modules}\label{drinfeld_notation}
\noindent Let $A=\F_q[t]$ denote the polynomial ring in the indeterminate $t$ and let $K$ be a field with a non zero ring homomorphism $\gamma:A \rightarrow K$. Necessarily, $K$ contains $\F_q$ as a subfield. Fix an algebraic closure $\bar{K}$ of $K$ and let $\tau: \bar{K} \longrightarrow \bar{K}$ denote the $q^{th}$ power Frobenius endomorphism. The ring of endomorphisms of the additive group scheme $\mathbb{G}_a$ over $K$ can be identified with the skew polynomial ring $K\langle \tau \rangle$ where $\tau$ satisfies the commutation rule $\forall u \in K, \tau u = u^q \tau$.
A rank-2 Drinfeld module over $K$ is (the $A$-module structure on $\mathbb{G}_a$ given by) a ring homomorphism
$$\phi : A \longrightarrow K\langle \tau \rangle$$ 
$$\ \ \ \ \ \ \ \ \ \ \ \ \ \ \ \ \ \ \ \ \ \ t \longmapsto \gamma(t) + g_\phi \tau + \D_\phi \tau^2$$
for some $g_\phi \in K$ and $\D_\phi \in K^\times$. For $a \in A$, let $\phi_a$ denote the image of $a$ under $\phi$. We will concern ourselves primarily with rank $2$ Drinfeld modules and unless otherwise noted, a Drinfeld module will mean a rank $2$ Drinfeld module.\\ \\
To every $A$-algebra $L$ over $\bar{K}$, the Drinfeld module $\phi$ endows a new $A$-module structure (which, we denote by $\phi(L)$) through the $A$-action $$\forall f \in L, \forall a \in A, a*f = \phi_a(f).$$  
For every $A$-algebra homomorphism $\rho:L \longrightarrow L$,  $\forall a\in A$ and $\forall f \in L$, $\rho(\phi(f)) = \phi(\rho(f))$. Thus $\rho$, when thought of as a map from $\phi(L) \longrightarrow \phi(L)$, is an $A$-module homomorphism. For every direct product $L \times L^\prime$ of $A$-algebras over $\bar{K}$, we hence have the corresponding direct sum of $A$-modules $$\phi(L \times L^\prime) \cong \phi(L) \oplus \phi(L^\prime).$$
Henceforth, we restrict our attention to Drinfeld modules $\phi:A \longrightarrow \F_q(t)$ over $\F_q(t)$ (with $\gamma : A \rightarrow \F_q(t)$ being the inclusion (identity map), $g_\phi\in A$ and $\Delta_\phi\in A^\times$) and their reductions.\\ \\
For a proper ideal $\f \subset A$, let $\F_\f$ denote $A/\f$. For a prime ideal $\p \subset A$, if $\D_\phi$ is non zero modulo $\p$, then the reduction $\phi/\p := \phi \otimes \F_\p$ of $\phi$ at $\p$ is defined through the ring homomorphism $$\phi/\p : A \longrightarrow \F_\p\langle \tau \rangle$$ $$\ \ \ \ \ \ \ \ \ \ \ \ \ \ \ \ \ \ \ \ \ \ \ \ \ \ \ \ \ \ \ \ \ \ \ \ \ \ \ \ \ \ \ \ \ \ \ \ t \longmapsto t + (g_\phi \mod \p) \tau + (\D_\phi\mod\p) \tau^2$$
and the image of $a \in A$ under $\phi/\p$ is denoted by $(\phi/\p)_a$. Even if $\Delta_{\phi}$ is zero modulo $\p$, one can still obtain the reduction $(\phi/\p)$ of $\phi$ at $\p$ through minimal models of $\phi$ (c.f. \cite{gek1}). We refrain from addressing this case since our algorithms do not require it.\\ \\
As before, the Drinfeld module $\phi/\p$ endows a new $A$-module structure (denoted by $\ph(L)$) to every $A$-algebra $L$ over the algebraic closure of $\F_\p$ through the $A$-action $$\forall f \in L, \forall a \in A, a*f = \ph_a(f)$$ and for every direct product $L \times L^\prime$ of $A$-algebras over the algebraic closure of $\F_\p$, $$\ph(L \times L^\prime) \cong \ph(L) \oplus \ph(L^\prime).$$
Further, for every $A$-algebra $L$ over the algebraic closure of $\F_\p$, $\phi(L) \cong \ph(L)$.\\ \\
Define the annihilator $Ann(L)$ of a finite $A$-module $L$ to be the monic generator of the annihilator ideal $\{a \in A | aL = 0 \}$ of $L$. Define the $A$-order $Ord(\alpha)$ of an element $\alpha$ in a finite $A$-module $L$ to be the monic generator of the annihilator ideal $\{a \in A | a\alpha =0 \}$ of $\alpha$.
For $f \in A$, denote by $(f)$ the ideal generated by $f$ and by $\deg(f)$ the degree of $f$. For a non zero ideal $\f \subset A$, let $\deg(\f)$ denote the degree of its monic generator. For $f,g \in A$, by $\gcd(f,g)$ we mean the monic generator of the ideal generated by $f$ and $g$.
\begin{definition}\label{euler}
Following \cite{gek1}, the Euler-Poincare characteristic of a finite $A$-module $L$ is defined as the unique monic polynomial $\chi(L) \in A$ such that
\begin{enumerate}
\item If $L \cong A/\p$ for a prime ideal $\p \subseteq A$, then $(\chi(L)) = \p$,
\item If $0\rightarrow L_1 \rightarrow L \rightarrow L_2 \rightarrow 0$ is exact, then $\chi(L)=\chi(L_1)\chi(L_2)$.
\end{enumerate}
\end{definition}
\noindent The above definition is a minor departure from \cite{gek1}, where $\chi(L)$ was defined as an ideal, not a monic generator. For a non zero ideal $\f \subset A$ and a Drinfeld module $\phi$, let $\chi_{\phi,\f}$ denote $\chi(\phi(\F_\f))$. By definition, $Ann(\phi(\F_\f))$ divides $\chi_{\phi,\f}$.
\subsection{Frobenius Distributions and the Structure of Rank 2 Drinfeld Modules}\label{drinfeld_structure}
In this subsection, we recount the characterization of the $A$-module structure of reductions of Drinfeld modules at primes due to Gekeler. Unless stated otherwise, proofs of claims made here are in \cite{gek1}. For Drinfeld modules $\phi,\psi$ with reduction at a prime $\p \subset A$, a $\mu \in \F_\p\langle \tau \rangle$ such that $$\forall a \in A,\ \ \mu (\phi/\p)_a = (\psi/\p)_a \mu$$ is called as a morphism from $\phi/\p$ to $\psi/\p$. Let $End_{\F_\p}(\phi)$ denote the endomorphism ring of $\phi/\p$. The Frobenius at $\p$, $\tau^{\deg(\p)}$ is in $End_{\F_\p}(\phi)$ and there exists a polynomial $$P_{\phi,\p}(X) = X^2-a_{\phi,\p}X+b_{\phi,\p} \in A[X]$$ (called the characteristic polynomial of the Frobenius at $\p$) such that $P_{\phi,\p}(\tau^{\deg(\p)}) = 0$ in $End_{\F_\p}(\phi)$. The polynomial $P_{\phi,\p}$ is called as the characteristic polynomial because it is the characteristic polynomial of  $\tau^{\deg(\p)}$ in the representations of $End_{\F_\p}(\phi)$ on the $\ell$-adic Tate modules for prime ideals $\ell \subset A$. \\ \\ 
Furthermore, a rank-2 Drinfeld module analogue of Hasse's theorem for elliptic curves states that $$(b_{\phi,\p}) = \p\  , \ \deg(a_{\phi,\p}) \leq \deg(\p)/2.$$ To be precise, the coefficient  $b_{\phi,\p} = \epsilon_{\phi,\p} p $ where $p$ is the monic generator of $\p$ and $$\epsilon_{\phi,\p} =1/((-1)^{\deg{\p}} \mathcal{N}_{\F_\p/\F_q}(\Delta_\phi)) \in \F_q^\times$$ where $\mathcal{N}_{\F_\p/\F_q}$ is the norm from $\F_\p$ to $\F_q$. As in the elliptic curve case, $a_{\phi,\p}$ is referred to as the Frobenius trace; a consequence of the aforementioned connection to Tate modules. Since $\tau^{\deg(\p)}$ acts as the identity on $\phi(\F_\p)$, $P_{\phi,\p}(1)$  kills $\phi(\F_p)$ (that is, $\ph_{P_{\phi,\p}(1)}$ is the zero element in $End_{\F_\p}(\phi)$). In fact, $$(\chi_{\phi,\p}) = (P_{\phi,\p}(1))  \Rightarrow \chi_{\phi,\p} = p - (a_{\phi,\p} - 1)/\epsilon_{\phi,\p}.$$
As a consequence of $\phi$ being of rank $2$, $\phi(\F_\p)$ is either a cyclic $A$-module or a direct sum of two cyclic $A$-modules. That is, there exists monic polynomials $m_{\phi,\p},n_{\phi,\p} \in A$ (not necessarily relatively prime) such that as $A$-modules $$ \phi(\F_\p) \cong  A/(m_{\phi,\p}) \oplus A/(m_{\phi,\p}n_{\phi,\p}).$$
In particular, $$\chi_{\phi,\p} = p - (a_{\phi,\p} - 1)/\epsilon_{\phi,\p} = m_{\phi,\p}^2n_{\phi,\p}\ ,\ Ann(\phi(\F_\p)) = lcm(m_{\phi,\p}, m_{\phi,\p}n_{\phi,\p}) = m_{\phi,\p}n_{\phi,\p}.$$
Further still, when $q$ is odd, $m_{\phi,\p}$ and $n_{\phi,\p}$ are completely determined by $P_{\phi,\p}$ and a precise characterization of $m_{\phi,\p}$ and $n_{\phi,\p}$ in terms of $a_{\phi,\p}$ and $\epsilon_{\phi,\p}$ is described by Cojocaru and Papikian \cite[Cor3]{cp}.\\ \\ 
Seeing that the characteristic polynomial completely determines the $A$-module structure of a Drinfeld module reduced at a prime $\p$, the question as to which polynomials can arise as such characteristic polynomials is immediate and is addressed in \cite{dri1}, \cite{gek1} and \cite{yu}. Yu \cite{yu} relates the number of isomorphism classes of Drinfeld modules $(\phi/\p)$ with a given characteristic polynomial $X^2-aX+\epsilon p$ to class numbers of orders in imaginary quadratic extensions over $K$. This relation arises from a theory that bears likeness to the complex multiplication theory of elliptic curves. 
Gekeler \cite{gek} using the complex multiplication theory proved a precise characterization of the number of isomorphism classes of Drinfeld modules $(\phi/\p)$ with a given characteristic polynomial $X^2-aX+\epsilon p$. Gekeler's characterization implies a certain equidistribution of the probability of $X^2-aX+\epsilon p$ being $P_{\phi,\p}$ for a randomly chosen $\phi$ and is discussed in the subsequent subsection. 
\subsection{Frobenius Distributions of Rank 2 Drinfeld Modules}\label{frobenius_distributions}
\noindent Let $\p \subset A$ be a prime of degree $d$ and let $p$ be its monic generator. Analysis of our algorithms will involve counting Drinfeld modules $(\phi/\p)$ with characteristic polynomial $P_{\phi,\p}(X) = X^2-aX+\epsilon p$ for a given $\epsilon \in \F_q^\times$ and $a \in A$ of degree at most $d/2$. Such precise counts were proven by Gekeler \cite{gek} by building on the connection between isomorphism classes of Drinfeld modules over $\F_\p$ and  class numbers of imaginary quadratic orders established by Yu \cite{yu}. We begin the section by stating the result of Gekeler in equation \ref{hstar} and later prove lemmas that will find use in the analysis of our algorithms.\\ \\
We identify a Drinfeld module $(\phi/\p)$ with the tuple $(\g_\phi \mod \p, \Delta_\phi \mod \p)$. The number of Drinfeld modules $(\phi/\p)$ is $|\F_\p||\F_\p^\times|$ since we get to pick $g_\phi \mod \p$ from $\F_\p$ and $\Delta_\phi \mod \p$ from $\F_\p^\times$. Two Drinfeld modules $(\phi/\p)$ and $(\psi/\p)$ are isomorphic over $\F_\p$ if and only if there is a $c \in \F_\p^\times$ such that $g_\psi = c^{q-1}g_\phi \mod \p$ and $\Delta_\psi = c^{q^2-1}\Delta_\phi \mod \p$.\\ \\
As a consequence, the automorphism group $Aut_{\F_\p}(\phi)$ depends on if $g_\phi \mod \p = 0$. If $g_\phi \mod \p = 0$ and if further $\F_\p$ contains a quadratic extension (call $\F_{q^2}$) of $\F_q$, then $Aut_{\F_\p}(\phi) \cong \F_{q^2}^\times$. Else, $Aut_{\F_\p}(\phi) \cong \F_q^\times$. The former case $Aut_{\F_\p}(\phi) \cong \F_{q^2}^\times$ corresponds to $\phi$ having complex multiplication by $\F_{q^2}[t]$ and is rare.\\ \\
For $\epsilon \in \F_q^\times$ and $a \in A$ not necessarily monic and of degree at most $\deg(\p)/2$, denote by $H(a,\epsilon,\p)$ a set of representatives of isomorphism classes of Drinfeld modules $(\phi/\p)$ with $P_{\phi,\p}(X) = X^2-aX+\epsilon p$. Define $$h^*(a,\epsilon,\p):= \sum_{(\phi/\p) \in H(a,\epsilon,\p)} \frac{q-1}{\#Aut_{\F_\p}(\phi)}$$
which might be thought of as a weighted count of the isomorphism classes of Drinfeld modules with $P_{\phi,\p}(X) = X^2-aX+\epsilon p$ where the weight $(q-1)/Aut_{\F_\p}(\phi)$ is $1$ except in rare cases.
We next describe the connection between class numbers of certain imaginary quadratic orders and $h^*(a,\epsilon,\p)$.\\ \\ 
Fix an $\epsilon \in \F_q^\times$ and an $a \in A$ of degree at most $\deg(\p)/2$. Let $C$ be the $A$-algebra generated by a root of $X^2-aX+\epsilon p$ and let $E$ be the quotient field of $C$. It turns out that $E$ is an imaginary quadratic extension of $k$.\\ \\
Yu \cite{yu} proved that two Drinfeld modules are isogenous if and only if they have the same characteristic polynomial and further established that the number of isomorphism classes of Drinfeld modules with characteristic polynomial $X^2-aX+\epsilon p$ equals the Gauss class number of $C$. This connection is analogous to a similar statement concerning elliptic curves due to Deuring \cite{deu}. Further, the weighted count $h^*(a,\epsilon,\p)$ equals a certain appropriately weighted Gauss class number of $C$ which was explicitly computed by Gekeler \cite{gek} using an analytic class number formula. We next summarize this result of Gekeler assuming for ease of exposition that $\F_q$ is of odd characteristic throughout this section.\\ \\
Let $B$ be the integral closure of $A$ in $E$. Let $D$ denote the discriminant $a^2-4\epsilon p$ and $f$ the largest monic square factor of $D$. Let $D_0 = D/f$. Then $C = A + f B$.\\ \\
Let $\xi$ denote the Dirichlet character associated with $E$ and for $\Re(s)>1$ define the L-function $$L(s,\chi) := \prod_{\ell\ prime\ of\ k}\left(1- \xi(\ell) |\ell|^{-s} \right)^{-1}.$$ 
The unique prime at infinity $\infty$ is ramified in $E/k$ if $\deg(D_0)$ is odd and is inert if $\deg(D_0)$ is even. Let $\eta$ denote the ramification index of $\infty$ in $E/k$ (that is, $\eta = 2$ if $\deg(D_0)$ is odd and $\eta=1$ otherwise). Let $g$ denote the genus of the algebraic curve associated with $E$. Then $$h^*(a,\epsilon,\p) = \eta q^g S(f) L(1,\xi)$$ where $$S(f):= \sum_{\f^\prime | \f} |\F_{\f^\prime}| \prod_{\ell | \f^\prime} (1-\xi(\ell)|\F_\ell|^{-1}).$$
Here $\f \subseteq A$ is the ideal generated by $f$, the summation is over proper ideals $\f^\prime$ dividing $\f$ and the product is over prime ideals $\ell$ dividing $\f^\prime$. When $\f$ is not $A$, we have $$S(f) \geq |\F_\f|.$$
The conductor $cond(\xi)$ of $\xi$ is 
\begin{equation*}
cond(\xi) = \begin{cases}
(D_0) &\text{if $\deg(D_0)$ is even}\\
(D_0).\infty &\text{if $\deg(D_0)$ is odd}
\end{cases}
\end{equation*}
But for a couple of exceptional cases, the genus $g$ is determined by $cond(\xi)$ as $$g=\deg(cond(\xi))/2-1.$$ The inequality \ref{hstar} we arrive at for $h^*(a,\epsilon,\p)$ will be accurate in those exceptional cases as well. Thus we refrain from mentioning the exceptional cases referring instead the interested reader to \cite{gek}. Thus 
\begin{equation*}
q^g = \begin{cases}
q^{\deg(D_0)/2-1} &\text{if $\deg(D_0)$ is even}\\
q^{\deg(D_0)/2-1/2} &\text{if $\deg(D_0)$ is odd}
\end{cases}
\end{equation*} and 
\begin{equation*}
|\F_\f| q^g = \begin{cases}
q^{\deg(D)/2-1} &\text{if $\deg(D)$ is even}\\
q^{\deg(D)/2-1/2} &\text{if $\deg(D)$ is odd.}
\end{cases}
\end{equation*}
In summary, 
\begin{equation}\label{hstar}
h^*(a,\epsilon,\p) \geq \begin{cases}
\frac{1}{q}q^{\deg(D)/2} L(1,\xi) &\text{if $\deg(D)$ is even}\\
\frac{2}{\sqrt{q}}q^{\deg(D)/2}L(1,\xi) &\text{if $\deg(D)$ is odd}
\end{cases}
\end{equation}
\begin{lemma}\label{set_bound1} Let $\p \subset A$ be a prime ideal and $p$ its monic generator. For every $S \subseteq \{(a,\epsilon) \in A \times \F_q^\times | \deg(a^2-4\epsilon p) = \deg(\p)\}$,
\begin{equation*}
\sum_{(a,\epsilon) \in S}h^*(a,\epsilon,\p) \geq \begin{cases}
|S|\sqrt{|\F_{\p}|}(1-\frac{\deg(\p)}{\sqrt{q}})/q  &\text{if $\deg(\p)$ is even}\\
2|S|\sqrt{|\F_{\p}|}(1-\frac{\deg(\p)}{\sqrt{q}})/\sqrt{q} &\text{if $\deg(\p)$ is odd.}
\end{cases}
\end{equation*}
\end{lemma}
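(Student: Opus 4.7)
The plan is to derive the bound pointwise from inequality \eqref{hstar} and then sum, with the only nontrivial ingredient being a uniform positive lower bound on $L(1,\xi)$ for the quadratic characters $\xi = \xi_{a,\epsilon}$ appearing in Gekeler's formula.

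\textbf{Step 1 (Pointwise application of \eqref{hstar}).} Fix $(a,\epsilon) \in S$, so that $\deg(D) = \deg(a^2 - 4\epsilon p) = \deg(\p)$. In particular $q^{\deg(D)/2} = q^{\deg(\p)/2} = \sqrt{|\F_\p|}$, and the parity of $\deg(D)$ equals the parity of $\deg(\p)$. Plugging into \eqref{hstar} therefore gives
$$h^*(a,\epsilon,\p) \;\geq\; \begin{cases} \dfrac{1}{q}\sqrt{|\F_\p|}\,L(1,\xi) & \text{if $\deg(\p)$ is even,} \\[4pt] \dfrac{2}{\sqrt{q}}\sqrt{|\F_\p|}\,L(1,\xi) & \text{if $\deg(\p)$ is odd.}\end{cases}$$
So the lemma will follow as soon as we show $L(1,\xi_{a,\epsilon}) \geq 1 - \deg(\p)/\sqrt{q}$ for every $(a,\epsilon)\in S$.

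\textbf{Step 2 (Uniform lower bound on $L(1,\xi)$ via Weil).} By the Riemann hypothesis for curves over finite fields, $L(s,\xi)$ is a polynomial of degree $2g$ in $q^{-s}$ whose inverse roots $\pi_1,\ldots,\pi_{2g}$ all satisfy $|\pi_j| = \sqrt{q}$. Hence
$$L(1,\xi) \;=\; \prod_{j=1}^{2g}\left(1-\tfrac{\pi_j}{q}\right).$$
Since $\xi$ is a real quadratic character, the $\pi_j$ occur in complex conjugate pairs, so the product is a positive real number (positivity also being guaranteed by the class number formula used in \cite{gek}), and each factor satisfies $|1-\pi_j/q| \geq 1 - 1/\sqrt{q}$. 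Thus $L(1,\xi) \geq (1-1/\sqrt{q})^{2g} \geq 1 - 2g/\sqrt{q}$ by Bernoulli's inequality. Using the formula $g = \deg(\operatorname{cond}(\xi))/2 - 1$ together with $\deg(\operatorname{cond}(\xi)) \leq \deg(D_0)+1 \leq \deg(D) = \deg(\p)$ (handling the two conductor cases from the excerpt separately and verifying that the exceptional cases also obey $2g \leq \deg(\p)$ by inspection of \eqref{hstar}), we obtain $2g \leq \deg(\p)$, and therefore $L(1,\xi) \geq 1 - \deg(\p)/\sqrt{q}$.

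\textbf{Step 3 (Summation).} Summing the bound of Step 1 over $(a,\epsilon)\in S$ and substituting the $L$-value estimate of Step 2 yields exactly the two cases in the statement.

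\textbf{Main obstacle.} The only nonroutine input is the uniform positive lower bound on $L(1,\xi)$ holding across all $(a,\epsilon)\in S$; this is precisely where one must invoke the function-field Riemann hypothesis and verify that the inequality $2g \leq \deg(\p)$ persists in the exceptional cases that Gekeler's formula treats separately. Everything else is bookkeeping with \eqref{hstar}.
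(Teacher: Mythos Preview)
Your proposal is correct and follows essentially the same route as the paper: bound $L(1,\xi)$ below via the Weil bounds on its zeros, apply Bernoulli's inequality to get $L(1,\xi)\ge 1-\deg(\p)/\sqrt{q}$, substitute into \eqref{hstar}, and sum. The only cosmetic difference is that the paper first strips off the Euler factor at $\infty$ and counts $\deg(\operatorname{cond}(\xi))-1$ Weil zeros for the finite $L$-function, whereas you work directly with the complete $L(s,\xi)=P_E(q^{-s})$ and count $2g$ zeros; since $2g=\deg(\operatorname{cond}(\xi))-2\le\deg(\p)$ in every case, both routes yield the same inequality.
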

\begin{proof} We first lower bound $L(1,\xi)$ for a $\xi$ corresponding to an arbitrary $(a,\epsilon)$. 
$$L(1,\xi) = \frac{1}{(1-\xi(\infty)|\infty|^-1)} \prod_{\ell\ prime\ of\ A}\left(1- \xi(\ell) |\ell|^{-1} \right)^{-1}$$
$$ \geq  \prod_{\ell\ prime\ of\ A}\left(1- \xi(\ell) |\ell|^{-1} \right)^{-1}$$
since $\xi(\infty) \in \{0,1\}$ and $|\infty| = q$.
From Weil's proof of the Riemann hypothesis for curves over finite fields, there exists $w_i \in \mathbb{C}$ with $|w_i| = \sqrt{q}$ such that 
$$\prod_{\ell\ prime\ of\ A}\left(1- \xi(\ell) |\ell|^{-1} \right)^{-1} = \prod_{i=1}^{\deg(cond(\xi))-1}\left(1-\frac{w_i}{q}\right).$$
Since $\deg(cond(\xi)) -1  \leq \deg(\p)$, 
$$\prod_{i=1}^{\deg(cond(\xi))-1}\left(1-\frac{w_i}{q}\right) \geq \left(1 - \frac{1}{\sqrt{q}}\right)^{\deg(\p)} \geq 1- \frac{\deg{\p}}{\sqrt{q}}.$$
For all $(a,\epsilon) \in S$, since $\deg(a^2-4\epsilon p) = \deg(\p)$, inequality \ref{hstar} implies
\begin{equation*}
\sum_{(a,\epsilon) \in S}h^*(a,\epsilon,\p) \geq \begin{cases}
|S|\sqrt{|\F_{\p}|}(1-\frac{\deg(\p)}{\sqrt{q}})/q  &\text{if $\deg(\p)$ is even}\\
2|S|\sqrt{|\F_{\p}|}(1-\frac{\deg(\p)}{\sqrt{q}})/\sqrt{q} &\text{if $\deg(\p)$ is odd.}
\end{cases}
\end{equation*}
\end{proof}
\begin{lemma}\label{set_bound2} Let $\p \subset A$ be a prime ideal and $p$ its monic generator. For every $S \subseteq \{(a,\epsilon) \in A \times \F_q^\times | \deg(a^2-4\epsilon p) = \deg(\p)\}$, the number of Drinfeld modules $(\phi/\p)$ over $\F_p$ with $(a_{\phi,\p},\epsilon_{\phi,\p}) \in S$ is lower bounded by 
\begin{equation*}
\left|\{(\phi/\p)| (a_{\phi,\p},\epsilon_{\phi,\p}) \in S\}\right|  \geq \begin{cases}
|\F_\p^\times||S|\sqrt{|\F_{\p}|}(1-\frac{\deg(\p)}{\sqrt{q}})/(q(q-1))  &\text{if $\deg(\p)$ is even}\\
2|\F_\p^\times||S|\sqrt{|\F_{\p}|}(1-\frac{\deg(\p)}{\sqrt{q}})/(\sqrt{q}(q-1)) &\text{if $\deg(\p)$ is odd.}
\end{cases}
\end{equation*}
\end{lemma}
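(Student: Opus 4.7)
The plan is to convert the weighted count $h^*(a,\epsilon,\p)$ of isomorphism classes provided by Lemma \ref{set_bound1} into an honest count of Drinfeld modules $(\phi/\p)$, and then sum over $(a,\epsilon) \in S$. The bridge is the orbit-stabilizer theorem applied to the natural $\F_\p^\times$-action on pairs $(g_\phi \bmod \p, \Delta_\phi \bmod \p)$.

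More precisely, recall from \S~\ref{drinfeld_structure} that two Drinfeld modules $(\phi/\p)$ and $(\psi/\p)$ are isomorphic over $\F_\p$ exactly when there exists $c \in \F_\p^\times$ with $g_\psi \equiv c^{q-1} g_\phi$ and $\Delta_\psi \equiv c^{q^2-1}\Delta_\phi$ modulo $\p$. Thus each isomorphism class is an orbit for the $\F_\p^\times$-action $c \cdot (g,\Delta) = (c^{q-1}g, c^{q^2-1}\Delta)$, whose stabilizer at $(\phi/\p)$ is precisely $Aut_{\F_\p}(\phi)$. By orbit-stabilizer, the isomorphism class of $(\phi/\p)$ contains exactly $|\F_\p^\times|/\#Aut_{\F_\p}(\phi)$ Drinfeld modules. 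Consequently, for any fixed $(a,\epsilon)$,
\begin{equation*}
\#\{(\phi/\p) : (a_{\phi,\p},\epsilon_{\phi,\p}) = (a,\epsilon)\} = \sum_{(\phi/\p) \in H(a,\epsilon,\p)} \frac{|\F_\p^\times|}{\#Aut_{\F_\p}(\phi)} = \frac{|\F_\p^\times|}{q-1}\, h^*(a,\epsilon,\p),
\end{equation*}
where the last equality uses the definition of $h^*(a,\epsilon,\p)$ recorded in \S~\ref{frobenius_distributions}.

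Summing this identity over $(a,\epsilon) \in S$ and invoking Lemma \ref{set_bound1} yields
\begin{equation*}
\left|\{(\phi/\p) : (a_{\phi,\p},\epsilon_{\phi,\p}) \in S\}\right| = \frac{|\F_\p^\times|}{q-1} \sum_{(a,\epsilon)\in S} h^*(a,\epsilon,\p),
\end{equation*}
from which the two cases ($\deg(\p)$ even or odd) in the statement of Lemma \ref{set_bound2} follow immediately by substituting the corresponding lower bound from Lemma \ref{set_bound1}.

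There is no real obstacle; the only subtlety worth flagging is verifying that each member of an isomorphism class does have characteristic polynomial $X^2 - aX + \epsilon p$, i.e.\ that $(a_{\phi,\p},\epsilon_{\phi,\p})$ is an isomorphism invariant. This is immediate from the fact that $P_{\phi,\p}$ is defined via the action of $\tau^{\deg(\p)}$ on an $\ell$-adic Tate module, which is intrinsic to the isogeny class. Thus the partition of all $(\phi/\p)$ into sets indexed by $(a,\epsilon)$ is consistent with the partition into isomorphism classes, legitimizing the summation above.
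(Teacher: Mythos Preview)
Your proposal is correct and follows essentially the same argument as the paper: both express the count of Drinfeld modules as $\frac{|\F_\p^\times|}{q-1}\sum_{(a,\epsilon)\in S} h^*(a,\epsilon,\p)$ via the orbit--stabilizer relation and then invoke Lemma~\ref{set_bound1}. You make explicit the orbit--stabilizer step and the isomorphism-invariance of $(a_{\phi,\p},\epsilon_{\phi,\p})$, which the paper leaves implicit, but the route is the same.
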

\begin{proof}
Fix an $S \subseteq \{(a,\epsilon) \in A \times \F_q^\times | \deg(a^2-4\epsilon p) = \deg(\p)\}$.
$$\left|\{(\phi/\p)| (a_{\phi,\p},\epsilon_{\phi,\p}) \in S\}\right|  = \sum_{(a,\epsilon) \in S} \sum_{(\phi/\p) \in H(a,\epsilon,\p)} \frac{|\F_p^\times|}{Aut_{\F_\p}(\phi)} $$
$$\ \ \ \ \ \ \ \ \ \ \ \ \ \ \ \ \ \ \ \ \ \ \  = \frac{|\F_\p^\times|}{q-1} \sum_{(a,\epsilon) \in S} h^*(a,\epsilon,\p).$$
Applying Lemma \ref{set_bound1}, we get
\begin{equation*}
\left|\{(\phi/\p)| (a_{\phi,\p},\epsilon_{\phi,\p}) \in S\}\right|  \geq \begin{cases}
|\F_\p^\times||S|\sqrt{|\F_{\p}|}(1-\frac{\deg(\p)}{\sqrt{q}})/(q(q-1))  &\text{if $\deg(\p)$ is even}\\
2|\F_\p^\times||S|\sqrt{|\F_{\p}|}(1-\frac{\deg(\p)}{\sqrt{q}})/(\sqrt{q}(q-1)) &\text{if $\deg(\p)$ is odd.}
\end{cases}
\end{equation*}
\end{proof}
\section{Degree Estimation and Euler-Poincare Characteristic of Drinfeld Modules}\label{degree_guessing_section}
\noindent Henceforth, let $h \in A$ denote the monic square free reducible polynomial whose factorization $h = \prod_i p_i$ into monic irreducible polynomials $p_i \in A$ we seek. By the chinese reminder theorem, $\F_\h = \prod_{i}\F_{\p_i}$ where $\h$ and $\p_i$s are the principal ideals generated by $h$ and the $p_i$s respectively. We next present a novel algorithm to compute the degree (call $s_h$) of the smallest degree factor of $h$ using Drinfeld modules. \\ \\ 
For a Drinfeld module $\phi$ that has reduction at each prime dividing $\h$, $$\phi(\F_\h) \cong \bigoplus_{i} \phi(\F_{\p_i}) \cong \bigoplus_{i} (\phi/\p_i)(\F_{\p_i}) \Rightarrow \chi_{\phi,\h} = \prod_{i} \chi_{\phi,\p_i} = \prod_i\left(p_i - (a_{\phi,\p_i} - 1)/\epsilon_{\phi,\p_i} \right).$$ 
Since $\forall i, \deg((a_{\phi,\p_i} - 1)/\epsilon_{\phi,\p_i}) \leq \deg(\p_i)/2$,\\ $$\chi_{\phi,\h} = h\ +\ terms\ of\ smaller\ degree.$$
In fact, 
$$h- \chi_{\phi,\h} =  \sum_{i: \deg(\p_i) = s_h} \left( \frac{a_{\phi,\p_i} - 1}{\epsilon_{\phi,\p_i}} \prod_{j \neq i}p_j\right) +\left( terms\ of\ degree < (\deg(\h) - \lceil s_h/2 \rceil)\right)$$
$$\Rightarrow \deg(h - \chi_{\phi,\h})  \leq \deg(h) - \lceil s_h/2 \rceil.$$
When $\phi$ is chosen at random, the equidistribution theorem of Gekeler suggests, with high probability,
$$\deg\left(\sum_{i: \deg(\p_i) = s_h} \left( \frac{a_{\phi,\p_i} - 1}{\epsilon_{\phi,\p_i}} \prod_{j \neq i}p_j\right)\right) =  \deg(h) - \lceil s_h/2 \rceil \Rightarrow \deg(h - \chi_{\phi,\h})  = \deg(h) - \lceil s_h/2 \rceil$$
leading to the following algorithm to compute $s_h$.
\begin{algorithm}\label{alg0}
\ \\ \textbf{Input :} Monic square free reducible polynomial $h \in A$ of degree $n$.
\begin{enumerate}
\item Choose a Drinfeld module $\phi$ by picking $g_\phi \in A$ and $\Delta_\phi \in A^\times$ each of degree less than $\deg(h)$ independently and uniformly at random. 
\item  If $\gcd(\Delta_\phi,h) \neq 1$, output it as a factor. Else $\phi$ has reduction at primes dividing $\h$ and we proceed.
\item Compute $\chi_{\phi,\h}$.
\item \textbf{Output:} $ n- \deg(h-\chi_{\phi,\h})$.
\end{enumerate}
\end{algorithm}
\noindent The running time of the algorithm is dominated by step $\textit{(3)}$. One way to compute $\chi_{\phi,\h}$ is as the characteristic polynomial of $\phi_t$ viewed as a linear transformation on $\F_\h$. Computing characteristic polynomials of linear transformations over finite fields can be performed in polynomial time \cite{sto}.\\ \\
The output  is at least $\lceil s_h/2 \rceil$. We prove in the ensuing lemma that when $q$ is large enough compared to $n$, the output is $\lceil s_h/2\rceil$ with probability at least $1/4$. We are thus ensured of finding $\lceil s_h/2 \rceil$ with probability $1-\delta$ with only $O(\log(1/\delta))$ repetitions of the algorithm. From $\lceil s_h/2 \rceil$, we infer that $s_h$ is either $2 \lceil s_h/2 \rceil - 1$ or $2 \lceil s_h/2 \rceil$. We can test which one is correct by checking if $\gcd(t^{q^{s_h}}-t,h)$ is non trivial and further extract the product of factors of degree $s_h$.
\begin{remark}\label{large_q}
In the analysis of our algorithms, we may assume without loss of generality that $q \geq c_1n^{c_2}$ for some absolute positive constants $c_1$ and $c_2$. If $q$ were smaller, we could choose the smallest prime $c$ that satisfies $q^c \geq c_1n^{c_2}$ and obtain the factorization in $\F_{q^c}[t]$ with the running time unchanged up to polylogarithmic factors in $n$.  Factors of $h$ irreducible over $\F_q$ and of degree prime to $c$ remain irreducible over $\F_{q^c}$. Factors of $h$ that are irreducible over $\F_q$ of degree (say $d$) divisible by $c$ will split into $c$ distinct irreducible factors over $\F_{q^c}$.  From the factorization over $\F_{q^c}$ thus obtained, express $h$ as $h = \prod_i h_i \prod_{d} h_d$ where $h_i \in \F_q[t]$ are irreducible factors that remained irreducible over $\F_{q^c}$ and $h_d \in \F_{q^c}[t]$ is the product of all irreducible factors of $h$ in $\F_{q^c}[t]$ (but not in $\F_q[t]$) of degree $d/c$.  In fact, $h_d$ is the product of all $\F_q[t]$ irreducible factors of $h$ of degree $d$ and hence $h_d \in \F_q[t]$. We may perform equal degree factorization on $h_d$ to obtain all $\F_q[t]$ irreducible degree $d$ factors of $h$. Since $c$ is bounded by an absolute constant, the post processing steps after obtaining the factorization over $\F_{q^c}$ take at most $O(n^{1+o(1)} (\log q)^{1+o(1)} + n (\log q)^{2+o(1)})$ expected time.
\end{remark}
\begin{lemma}\label{alg_0_proof} 
If $q$ is odd and $\sqrt{q} \geq 2 n$, algorithm \ref{alg0} outputs $\lceil s_h/2 \rceil$ with probability at least $1/4$.
\end{lemma}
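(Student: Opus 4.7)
The plan is to analyze the polynomial $h - \chi_{\phi,\h} \in A$ directly and show that its top-degree coefficient is nonzero with probability at least $1/4$.  Writing $c_i := (a_{\phi,\p_i} - 1)/\epsilon_{\phi,\p_i}$, so that $\chi_{\phi,\p_i} = p_i - c_i$ with $\deg(c_i) \leq \deg(\p_i)/2$, the expansion
\[
h - \chi_{\phi,\h} \;=\; \prod_i p_i \;-\; \prod_i (p_i - c_i) \;=\; \sum_i c_i \prod_{j \neq i} p_j \;+\; R
\]
has a remainder $R$ of degree at most $n - 2\lceil s_h/2 \rceil$ (each summand of $R$ being divisible by at least two $c_i$'s), whereas each single-product term $c_i \prod_{j \neq i} p_j$ has degree at most $n - \lceil \deg(\p_i)/2 \rceil$.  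Hence the only contributions at the target degree $n - \lceil s_h/2 \rceil$ come from indices $i$ with $\deg(\p_i) = s_h$ (and, in the odd-$s_h$ case, possibly $\deg(\p_i) = s_h + 1$).  Setting $r = \lfloor s_h/2 \rfloor$ and writing $a_r^{(i)}$ for the coefficient of $t^r$ in $a_{\phi,\p_i}$, the coefficient of $t^{n - \lceil s_h/2 \rceil}$ in $h - \chi_{\phi,\h}$ is an explicit $\F_q$-linear combination $T$ of the quantities $a_r^{(i)}/\epsilon_{\phi,\p_i}$ (with a $-1/\epsilon_{\phi,\p_i}$ shift when $r = 0$), and step (4) outputs $\lceil s_h/2 \rceil$ precisely when $T \neq 0$.

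It thus suffices to show $\Pr[T \neq 0] \geq 1/4$.  Since $g_\phi, \Delta_\phi$ are chosen independently and uniformly from polynomials in $A$ of degree less than $n$, the Chinese Remainder Theorem makes $(g_\phi \bmod p_i, \Delta_\phi \bmod p_i)$ jointly uniform on $\F_{\p_i} \times \F_{\p_i}$ and independent across $i$; conditional on the high-probability event $\gcd(\Delta_\phi, h) = 1$ (which fails with probability at most $n/q$, in which case step (2) already returns a factor), the Drinfeld modules $\phi/\p_i$ are therefore independent.  Fix any $i_0$ with $\deg(\p_{i_0}) = s_h$, condition on the values $(a_{\phi,\p_j}, \epsilon_{\phi,\p_j})$ for all $j \neq i_0$, and absorb their contribution to $T$ into a single constant $C \in \F_q$; the event $T = 0$ then becomes the single linear constraint $a_r^{(i_0)} = C \cdot \epsilon_{\phi,\p_{i_0}} + [r=0]$.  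Let $G$ consist of all valid pairs $(a, \epsilon) \in A \times \F_q^\times$ with $\deg(a) \leq s_h/2$, $\deg(a^2 - 4\epsilon p_{i_0}) = s_h$, and $a_r \neq C \epsilon + [r=0]$; excluding at most three forbidden values of $a_r$ per $\epsilon$ (the one linear constraint, plus up to two roots of $a_r^2 = 4\epsilon$ when $s_h$ is even) gives $|G| \geq (q - 3) q^r (q - 1)$.  Applying Lemma \ref{set_bound2} with $S = G$ and dividing by the total $q^{s_h}(q^{s_h}-1)$ of Drinfeld modules $\phi/\p_{i_0}$ yields, in the even-$s_h$ case,
\[
\Pr\bigl[\,(a_{\phi,\p_{i_0}}, \epsilon_{\phi,\p_{i_0}}) \in G\,\bigr] \;\geq\; \frac{(q-3)\bigl(1 - s_h/\sqrt{q}\bigr)}{q},
\]
with a stronger analogous bound in the odd-$s_h$ case.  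Under $\sqrt{q} \geq 2n$, both $(q-3)/q$ and $1 - s_h/\sqrt{q}$ exceed $1/2$, so this probability exceeds $1/4$.

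The main obstacle is ensuring that the slack factor $1 - s_h/\sqrt{q}$ provided by Lemma \ref{set_bound2} is not eaten up by the density loss of $G$ inside the ambient set of allowed characteristic-polynomial parameters; this is precisely what dictates the hypothesis $\sqrt{q} \geq 2n$, which simultaneously keeps $G$ near-full density and keeps $1 - s_h/\sqrt{q}$ bounded away from zero.  The remaining items --- carefully checking that no higher-order cross term in the expansion of $\prod_i(p_i - c_i)$ reaches degree $n - \lceil s_h/2 \rceil$, transferring independence across the reductions $\phi/\p_i$ via the CRT, and showing that step (2) only preempts the degree estimation with negligible probability --- are routine bookkeeping.
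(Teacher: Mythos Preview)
Your proof is correct and follows essentially the same route as the paper: expand $h-\chi_{\phi,\h}$, isolate the coefficient in degree $n-\lceil s_h/2\rceil$, use the Chinese Remainder Theorem to make the reductions $\phi/\p_i$ independent, freeze all but one prime $\p_{i_0}$ of degree $s_h$, and then apply Lemma~\ref{set_bound2} to a set of admissible $(a,\epsilon)$ to bound the probability that the remaining summand avoids the forbidden value. Your bookkeeping is in fact slightly more careful than the paper's in two places---you note that when $s_h$ is odd a factor of degree $s_h+1$ can also contribute at the target degree (and correctly absorb it into $C$), and you track the $-1/\epsilon$ shift in the $r=0$ case---but these refinements do not change the structure of the argument.
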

\begin{proof}
For the output $\deg(h) - \deg(h - \chi_{\phi,\h})$ of algorithm \ref{alg0} to be $\lceil s_h/2 \rceil$, it suffices for $$\deg\left(\sum_{i: \deg(\p_i) = s_h} \left( \frac{a_{\phi,\p_i} - 1}{\epsilon_{\phi,\p_i}} \prod_{j \neq i}p_j\right)\right) =  \deg(h) - \lceil s_h/2 \rceil$$ to hold. Since $p_i$ are all monic, this is equivalent to \begin{equation}\label{sum_condition} \sum_{i: \deg(\p_i) = s_h}  \frac{a_{\phi,\p_i,\lfloor s_h/2 \rfloor}}{\epsilon_{\phi,\p_i}} \neq 0 \end{equation} where $a_{\phi,\p_i,\lfloor s_h/2 \rfloor} \in \F_q$ denotes the coefficient of the $t^{\lfloor s_h/2 \rfloor}$ term in $a_{\phi,\p_i}$.\\ \\
Fix a factor $\p_j$ of $\h$ of degree $s_h$.\\ \\
Since $g_\phi \in A$ and $\Delta_\phi \in A^\times$ are each chosen of degree less than $\deg(h)$ independently and uniformly at random (with $\gcd(\Delta_\phi,h) = 1$), by the chinese remainder theorem, the tuple $\{(g_\phi \mod \p_i, \Delta_\phi \mod \p_i)\}_i$ is distributed uniformly in $\prod_i(\F_{\p_i} \times \F_{\p_i}^\times)$.\\ \\
In particular, $\forall i\neq j$, $(a_{\phi,\p_i},\epsilon_{\phi,\p_i})$ and $(a_{\phi,\p_j},\epsilon_{\phi,\p_j})$ are independent and 
 $$Pr\left(\sum_{i: \deg(\p_i) = s_h}  \frac{a_{\phi,\p_i,\lfloor s_h/2 \rfloor} }{\epsilon_{\phi,\p_i}} \neq 0\right) = \sum_{\theta \in \F_q}\left(Pr\left(\sum_{i: \deg(\p_i) = s_h, i \neq j}  \frac{a_{\phi,\p_i,\lfloor s_h/2 \rfloor}}{\epsilon_{\phi,\p_i}}= -\theta \right)Pr\left(\frac{a_{\phi,\p_j,\lfloor s_h/2 \rfloor} }{\epsilon_{\phi,\p_j}} \neq \theta \right)\right)$$
\begin{equation}\label{min}\Rightarrow Pr\left(\sum_{i: \deg(\p_i) = s_h}  \frac{a_{\phi,\p_i,\lfloor s_h/2 \rfloor} }{\epsilon_{\phi,\p_i}} \neq 0\right)  \geq \min_{\theta \in \F_q} Pr\left(\frac{a_{\phi,\p_j,\lfloor s_h/2 \rfloor} }{\epsilon_{\phi,\p_j}} \neq \theta \right).
\end{equation}
Fix a $\theta \in \F_q$ and let $$S_{\theta}:=\{(a,\epsilon) \in A \times \F_q^\times | \deg(a^2-4\epsilon p_j) = \deg(p_j), a_{\lfloor s_h/2 \rfloor}/\epsilon \neq \theta\}$$ where $a_{\lfloor s_h/2 \rfloor} \in \F_q$ denotes the coefficient of the $t^{\lfloor s_h/2 \rfloor}$ term in $a$. \\ \\
We next count elements in $S_\theta$ to lower bound its size. The condition $\deg(a^2-4\epsilon p_j) = \deg(p_j)$ implies that we are only allowed to pick $a \in A$ of degree at most $\deg(p_j)/2$. If $\deg(p_j)$ is odd, $\deg(a^2-4\epsilon p_j) = \deg(p_j)$ is always satisfied for every $a \in A$ of degree at most $\deg(p_j)/2$. To pick an element in $S_\theta$, but for the coefficient $a_{\lfloor s_h/2 \rfloor}$, we may choose the coefficients of $a \in A$ arbitrarily of degree at most $\deg(\p_j)/2$ and arbitrarily choose $\epsilon \in \F_q^\times$. For each such choice, to satisfy $\deg(a^2-4\epsilon p_j) = \deg(p_j)$ and $a_{\lfloor s_h/2 \rfloor}/\epsilon \neq \theta$, we need to exclude at most two choices for $a_{\lfloor s_h/2 \rfloor}$ if $\deg(p_j)$ is even and at most one choice if $\deg(p_j)$ is odd. Thus
\begin{equation*}
 \left|S_\theta\right|  \geq \begin{cases}
(q-2) \sqrt{|\F_{\p_j}|} q   &\text{if $\deg(p_j)$ is even}\\
(q-1) \sqrt{|\F_{\p_j}|} \sqrt{q} &\text{if $\deg(p_j)$ is odd.}
\end{cases}
\end{equation*}
Applying Lemma \ref{set_bound2} for $S_\theta$, we get 
$$\left|\{(\phi/\p_j)| (a_{\phi,\p_j},\epsilon_{\phi,\p_j}) \in S_\theta \}\right|  \geq \left(1-\frac{\deg(p_j)}{\sqrt{q}}\right) \left(1-\frac{1}{q-1}\right) |\F_{\p_j}| |\F_{\p_j}^\times|.$$
Thus, for $\sqrt{q} \geq 2 \deg(h) \geq 2 \deg(p_j)$, $$Pr\left(\frac{a_{\phi,\p_j,\lfloor s_h/2 \rfloor}}{\epsilon_{\phi,\p_j}} \neq \theta \right) \geq \left(1-\frac{\deg(p_j)}{\sqrt{q}}\right) \left(1-\frac{1}{q-1}\right) \geq \frac{1}{4}$$ and by equation \ref{min} the lemma follows. 
\end{proof}
\noindent \textbf{Proof of Theorem \ref{chi_reduction_theorem} and Corollary \ref{chi_factor_corollary}:} Theorem \ref{chi_reduction_theorem} follows from the proof of correctness (Lemma \ref{alg_0_proof}) of Algorithm \ref{alg0}. Corollary \ref{chi_factor_corollary} follows by considering lines one and three in algorithm \ref{alg0} to be performed by the black box $\mathcal{B}$ in Corollary \ref{chi_factor_corollary}. The case $\gcd(\Delta_{\phi},h) \neq 1$  for a random $\phi$ is unlikely to happen and hence step $2$ of algorithm \ref{alg0} may be ignored in the reduction.\\ \\
When $q$ is large enough (say $q \geq 2 \deg(h)^4$), it is likely for a randomly chosen $\phi$ that $\phi(\F_\h)$ is a cyclic $A$-module. Further, for a random $\alpha \in \phi(\F_\h)$, it is likely that $Ord(\alpha) = Ann(\phi(\F_\h))$. Since $\phi(\F_\h)$ being cyclic implies $\chi_{\phi,\h} = Ann(\phi(\F_\h))$, it is likely that $Ord(\alpha)  = \chi_{\phi,\h}$. Thus, instead of computing $\chi_{\phi,\h}$ in Algorithm \ref{alg0}, we could compute $Ord(\alpha)$ for a random $\alpha \in \phi(\F_\h)$ and be assured that the output $\deg(h) - \deg(h - Ord(\alpha))$ likely is $\lceil s_h/2 \rceil$. 
\begin{algorithm}\label{alg1}
\ \\ \textbf{Input :} Monic square free reducible polynomial $h \in A$ of degree $n$.
\begin{enumerate}
\item Choose a Drinfeld module $\phi$ by picking $g_\phi \in A$ and $\Delta_\phi \in A^\times$ each of degree less than $\deg(h)$ independently and uniformly at random. 
\item  If $\gcd(\Delta_\phi,h) \neq 1$, output it as a factor. Else $\phi$ has reduction at primes dividing $\h$ and we proceed.
\item Choose $\alpha \in \phi(\F_\h)$ at random and compute $Ord(\alpha)$ with constant probability.
\item If $\deg(Ord(\alpha)) = \deg(h)$, \textbf{Output:} $ n- \deg(h-Ord(\alpha))$.
\end{enumerate}
\end{algorithm}
\noindent Every step except for $\textit{(3)}$ can be performed in $O(n \log q)$ time. In \S~\ref{order_computation}, we show that the order of an element in $\phi(\F_\h)$ (and hence step $\textit{(3)}$) can be computed with probability arbitrarily close to $1$ in $O(n^{(1+\omega)/2} +o(1) (\log q)^{1+o(1)} + n^{1+o(1)} (\log q)^{2+o(1)})$ expected time. In the subsequent lemma, we prove a lower bound on the probability that reductions of Drinfeld modules are cyclic and use it in Theorem \ref{alg_1_proof} to prove that Algorithm \ref{alg1} outputs $s_h$ with constant probability. Consequently, we have an $O(n^{(1+\omega)/2} +o(1) (\log q)^{1+o(1)} + n^{1+o(1)} (\log q)^{2+o(1)})$ expected time algorithm to extract a non trivial factor.
\begin{lemma}\label{cyclic_lemma}
For odd $q$, for every prime ideal $\p \subset A$, the probability that $\phi(\F_\p)$ is a cyclic $A$-module for a randomly chosen $\phi/\p$ is at least $\left(1-\frac{\deg(\p)+1/2}{2(q-1)}\right)$. 
\end{lemma}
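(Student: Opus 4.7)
The plan is to translate non-cyclicity of $\phi(\F_\p)$ into a congruence condition on the Frobenius characteristic polynomial modulo primes of $A$ and then apply a union bound built on Gekeler's equidistribution estimates. By the structure theorem recounted in Subsection~\ref{drinfeld_structure}, $\phi(\F_\p)\cong A/(m_{\phi,\p})\oplus A/(m_{\phi,\p}n_{\phi,\p})$ is non-cyclic iff $m_{\phi,\p}\neq 1$, iff there is a prime $\ell\subset A$ with $\phi[\ell]\subset\phi(\F_\p)$. Since $m_{\phi,\p}^{2}\mid\chi_{\phi,\p}$ has degree $\deg(\p)$, any such $\ell$ satisfies $\deg(\ell)\leq\deg(\p)/2$. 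Furthermore, the inclusion $\phi[\ell]\subset\phi(\F_\p)$ forces Frobenius $\tau^{\deg(\p)}$ to act as the identity on the free $A/\ell$-module $\phi[\ell]$ of rank two, so $P_{\phi,\p}(X)\equiv(X-1)^{2}\pmod{\ell}$. Reading off coefficients yields the two congruences
\[
a_{\phi,\p}\equiv 2\pmod{\ell}, \qquad \epsilon_{\phi,\p}\,p\equiv 1\pmod{\ell}.
\]

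The second congruence is the rigid one. Since $\epsilon_{\phi,\p}$ depends only on $\Delta_{\phi}$ through the $\F_\p/\F_q$ norm, it is uniform on $\F_q^{\times}$; conditioning on $\epsilon_{\phi,\p}=\epsilon_{0}$ (probability $1/(q-1)$), the bad primes $\ell$ are exactly the prime divisors of the degree-$\deg(\p)$ polynomial $\epsilon_{0}p-1\in A$. For each such pair $(\epsilon_{0},\ell)$ with $\deg(\ell)\leq\deg(\p)/2$, I would estimate the conditional probability that $a_{\phi,\p}\equiv 2\pmod{\ell}$ by running the Weil/$L$-function argument from the proof of Lemma~\ref{set_bound1} in the upper-bound direction --- replacing the lower bound $L(1,\xi)\geq 1-\deg(\p)/\sqrt{q}$ with the matching upper bound inside Gekeler's weighted class-number formula $h^{*}$ --- to obtain $(1+o(1))/|\F_{\ell}|$. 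A union bound over the prime divisors of $\epsilon_{0}p-1$ of degree at most $\deg(\p)/2$, averaged over $\epsilon_{0}\in\F_q^{\times}$, then produces the desired probability bound; the combinatorial sum $\sum_{\ell\mid\epsilon_{0}p-1}|\F_{\ell}|^{-1}$ stays under control because $\sum_{\ell\mid\epsilon_{0}p-1}\deg(\ell)\leq\deg(\p)$, with the extremal case being a single degree-one prime.

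The main obstacle is quantitative: extracting the exact constant $(\deg(\p)+1/2)/(2(q-1))$, rather than a looser $O(\deg(\p)/(q-1))$ estimate, requires two refinements. First, the condition $P_{\phi,\p}\equiv(X-1)^{2}\pmod{\ell}$ is necessary but not sufficient for $\phi[\ell]\subset\phi(\F_\p)$, since Frobenius could act as a Jordan block rather than as the identity; only the identity action contributes, and a conjugacy-class count inside $\mathrm{GL}_{2}(A/\ell)$ sharpens the union bound by essentially a factor of two. Second, the parity-of-$\deg(D)$ dichotomy in Gekeler's formula (the two cases displayed in equation~\ref{hstar}) contributes the $1/2$ numerator offset. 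The conceptual content is entirely in the first paragraph; the remainder is careful calibration of constants using the explicit form of $h^{*}$ already recorded in Subsection~\ref{frobenius_distributions}.
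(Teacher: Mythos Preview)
Your approach is conceptually sound but takes a genuinely different, and harder, route than the paper.

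The paper's proof is a direct count on the \emph{good} side. It invokes the Cojocaru--Papikian formula $m_{\phi,\p}=\gcd(f_{\phi,\p},\,a_{\phi,\p}-2)$ (with $f_{\phi,\p}$ the square part of the discriminant) to obtain the sufficient cyclicity criterion $\gcd(a^{2}-4\epsilon p,\,a-2)=1$, which is exactly the negation of your necessary condition $P_{\phi,\p}\equiv(X-1)^{2}\pmod{\ell}$ for some prime~$\ell$. It then defines $S_{\p}=\{(a,\epsilon):\gcd(a^{2}-4\epsilon p,a-2)=1,\ \deg(a^{2}-4\epsilon p)=\deg(\p)\}$ and lower-bounds $|S_{\p}|$ by fixing $a$ and excluding at most $\deg(a-2)\leq\deg(\p)/2$ values of $\epsilon$ (one per prime $\ell\mid a-2$, since $\ell\nmid p$ forces at most one $\epsilon$ with $a^{2}-4\epsilon p\equiv 0\pmod{\ell}$). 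Finally it feeds $S_{\p}$ into the already-proved \emph{lower} bound Lemma~\ref{set_bound2}. No upper bound on $h^{*}$ or $L(1,\xi)$ is ever needed, and the constant $(\deg(\p)+1/2)/(2(q-1))$ drops out of the count $|S_\p|\ge (q-1-\deg(\p)/2-1)\cdot(\text{size of the $a$-range})$ with no further refinement.

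Your union-bound route over bad primes $\ell\mid(\epsilon_{0}p-1)$ instead requires an \emph{upper} bound on the number of Drinfeld modules with prescribed $(a\bmod\ell,\epsilon)$, i.e.\ an upper-bound analogue of Lemma~\ref{set_bound2}, which the paper never states. Such a bound can indeed be extracted from Gekeler's formula by bounding $L(1,\xi)$ from above, but it is extra work. Moreover, your ``extremal case being a single degree-one prime'' is backwards: the union-bound sum $\sum_{\ell\mid(\epsilon_{0}p-1)}|\F_{\ell}|^{-1}$ is \emph{maximized} when $\epsilon_{0}p-1$ splits completely into linear factors, giving $\deg(\p)/q$, which is already a factor of two worse than the target $\deg(\p)/(2(q-1))$. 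The ``Jordan block versus identity'' refinement you propose does not recover this factor, because the paper is bounding the same set $\{\gcd(a^{2}-4\epsilon p,a-2)\neq 1\}$ that you are; the factor of two comes from the paper's asymmetric counting (fix $a$, exclude~$\epsilon$) rather than from any $\mathrm{GL}_{2}$ conjugacy-class argument. So while your strategy would yield a correct $O(\deg(\p)/q)$ bound, the precise constant in the lemma seems out of reach along your path without essentially reverting to the paper's counting.
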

\begin{proof} Let $\p \subset A$ be a prime ideal and $p$ its monic generator. Cojocaru and Papikian \cite[Cor 3]{cp} determined the following precise characterization of the $A$-module structure of finite Drinfeld modules when $\F_q$ is of odd characteristic. For a Drinfeld module $\phi$ with reduction at $\p$, let $f_{\phi,\p} \in A$ denote the largest monic square factor of the discriminant $a_{\phi,\p}^2-4 \epsilon_{\phi,\p}p$ of $P_{\phi,\p}$. As $A$-modules $$\phi(\F_\p) \cong  A/(m_{\phi,\p}) \oplus A/(m_{\phi,\p} n_{\phi,\p})$$ where $$m_{\phi,\p} = \gcd(f_{\phi,\p}, a_{\phi,\p}-2).$$ In particular, $\phi(\F_{\p})$ is $A$-cyclic if and only if $\gcd(f_{\phi,\p},a_{\phi,\p}-2)=1$. Let $$S_{\p}:=\{(a,\epsilon) \in A \times \F_q^\times | \deg(a^2-4\epsilon p) = \deg(p), \gcd(a^2-4\epsilon p, a-2)=1\}.$$ We next estimate the size of $S_{\p}$. An element in $S_{\p}$ can be chosen as follows. Pick $a \in A$ arbitrarily of degree at most $\deg(\p)/2$. For such a chosen $a$, to satisfy $\gcd(a^2-4\epsilon p, a-2)=1$, pick $\epsilon$ such that for all monic irreducible polynomials $\ell$ dividing $a-2$, $a^2-4\epsilon p \neq 0 \mod \ell$. For a fixed monic irreducible $\ell$ dividing $a-2$, there is at most one $\epsilon \in \F_q^\times$ such that $a^2-4\epsilon p = 0 \mod \ell$ for if there were two, then that would imply $\ell$ divides $p$ which contradicts the fact that $p$ is irreducible and of degree higher than $\ell$. Thus, for a chosen $a$, to ensure $a^2-4\epsilon p$ and $a-2$ are relatively prime, we need to exclude at most $\deg(\p)/2$ values for $\epsilon$. When $\deg(\p)$ is even, for a chosen $a$, to ensure $\deg(a^2-4\epsilon p) = \deg(\p)$, we need to exclude at most one choice for $\epsilon$.
\begin{equation*}
 \Rightarrow \left|S_{\p}\right|  \geq \begin{cases}
(q-2-\deg(\p)/2) \sqrt{|\F_{\p_j}|} q   &\text{if $\deg(\p)$ is even}\\
(q-1-\deg(\p)/2) \sqrt{|\F_{\p_j}|} \sqrt{q} &\text{if $\deg(\p)$ is odd.}
\end{cases}
\end{equation*}
Applying Lemma \ref{set_bound2} for $S_{\p}$, we get  
$$\left|\{(\phi/\p)| (a_{\phi,\p},\epsilon_{\phi,\p}) \in S_{\p} \}\right|  \geq \left(1-\frac{\deg(\p)+1/2}{2(q-1)}\right) |\F_{\p}| |\F_{\p}^\times|.$$
Thus $\phi(\F_{\p})$ is $A$-cyclic with probability at least $\left(1-\frac{\deg(\p)+1/2}{2(q-1)}\right).$
\end{proof}
\begin{theorem}\label{alg_1_proof} There exists a positive constant $c$ such that for $q$ odd and at least $2n^4$, algorithm \ref{alg1} outputs $\lceil s_h/2 \rceil$ with probability at least $c$.
\end{theorem}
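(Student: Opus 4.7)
The plan is to show that, with constant probability, the computed order $Ord(\alpha)$ equals $\chi_{\phi,\h}$ and simultaneously the leading-coefficient condition from Lemma~\ref{alg_0_proof} holds; together these guarantee $\deg(Ord(\alpha))=\deg(h)=n$ (so Step~4 of Algorithm~\ref{alg1} fires) and $n-\deg(h-Ord(\alpha))=\lceil s_h/2 \rceil$. By the Chinese remainder theorem applied to $g_\phi$ and $\Delta_\phi$ modulo each $\p_i$, the local reductions $(\phi/\p_i)$ are mutually independent. Writing $\alpha=(\alpha_1,\ldots,\alpha_r)$ with each $\alpha_i$ uniform in $\phi(\F_{\p_i})$, we have $Ord(\alpha)=\mathrm{lcm}_i\,Ord(\alpha_i)$, and so $Ord(\alpha)=\chi_{\phi,\h}=\prod_i \chi_{\phi,\p_i}$ holds precisely when (i) each $\phi(\F_{\p_i})$ is a cyclic $A$-module, (ii) each $\alpha_i$ generates its local component, and (iii) the polynomials $\chi_{\phi,\p_i}$ are pairwise coprime.

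Event (i) is handled by Lemma~\ref{cyclic_lemma}: the joint probability that every $\phi(\F_{\p_i})$ is cyclic is at least $\prod_i(1-(\deg(\p_i)+1/2)/(2(q-1)))\geq 1-3n/(4(q-1))$, which is $1-O(1/n^3)$ when $q\geq 2n^4$. Conditional on (i) and (iii), $\phi(\F_\h)$ is cyclic with annihilator $\chi_{\phi,\h}$ of degree $n$, so (ii) holds with probability $\prod_{\ell \mid \chi_{\phi,\h}}(1-q^{-\deg \ell}) \geq (1-1/q)^n \geq 1-n/q$, namely $1-O(1/n^3)$. Event (iii) is where the bulk of the work lies: for each pair $i\neq j$ and each monic irreducible $\ell\in A$ of degree $d$, write $\chi_{\phi,\p_i}=p_i-(a_{\phi,\p_i}-1)/\epsilon_{\phi,\p_i}$ and note that the constraint $\ell\mid\chi_{\phi,\p_i}$ corresponds to an $(a,\epsilon)$-subset of density $O(q^{-d})$ inside the admissible set. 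Combining Lemma~\ref{set_bound2} with a matching upper bound on $h^*(a,\epsilon,\p)$ (extracted from the same analytic class-number formula of Gekeler by using $L(1,\xi)\leq 1+O(\deg(\p)/\sqrt{q})$ and $S(f)\leq O(|\F_\f|)$) yields $\Pr(\ell\mid\chi_{\phi,\p_i})=O(q^{-d})$ and hence by independence $\Pr(\ell\mid\gcd(\chi_{\phi,\p_i},\chi_{\phi,\p_j}))=O(q^{-2d})$. Summing $\sum_{d\geq 1}(q^d/d)\cdot q^{-2d}=O(1/q)$ per pair and taking a union bound over the at most $n^2$ pairs gives (iii) with probability at least $1-O(n^2/q)\geq 1-1/(2n^2)$.

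Finally, Lemma~\ref{alg_0_proof} supplies the leading-coefficient condition with probability at least $1/4$, since $\sqrt{q}\geq 2n$ is implied by $q\geq 2n^4$. A union bound over the complementary failure events yields overall success probability at least $1/4 - o(1)$, which is bounded below by an absolute positive constant $c$ (for $n$ large enough; the finitely many small $n$ can be absorbed by adjusting $c$). The main obstacle is the coprimality estimate in (iii): one needs not only the lower bound of Lemma~\ref{set_bound2} but also a matching upper bound on $h^*(a,\epsilon,\p)$, obtained from the same Gekeler class-number formula by bounding the $L$-function factor and the divisor sum $S(f)$ from above. The remaining pieces assemble routinely from estimates already developed in \S\ref{drinfeld_notation} and \S\ref{degree_guessing_section}.
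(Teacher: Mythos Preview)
Your decomposition into events (i)--(iii) plus the leading-coefficient condition is sound, and the final union bound goes through; the approach is correct but differs substantially from the paper's.  The paper never appeals to an \emph{upper} bound on $h^*(a,\epsilon,\p)$.  Instead it runs an induction on the primes $\p_1,\ldots,\p_m$ (ordered by decreasing degree): conditioned on $\phi(\F_{\h_{i-1}})$ being cyclic, the already-chosen local data fix $\chi_{\phi,\h_{i-1}}$, and the coprimality constraint $\gcd(\chi_{\phi,\p_i},\chi_{\phi,\h_{i-1}})=1$ is then handled by the elementary observation that for each fixed $a\neq 1$, each monic irreducible $\ell\mid \chi_{\phi,\h_{i-1}}$ rules out at most one value of $\epsilon$ (since two bad $\epsilon$'s would force $\ell\mid p_i$).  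Together with the local-cyclicity condition from Lemma~\ref{cyclic_lemma} and, at the final step $i=m$, the leading-coefficient condition, this carves out a set $S_{\p_i}$ of good $(a,\epsilon)$ whose density is $1-O(\deg h/q)$; a single application of the \emph{lower} bound Lemma~\ref{set_bound2} then gives the inductive step.  Your route is more modular---you cleanly separate cyclicity, coprimality, generator, and leading coefficient---but the price is the matching upper bound on $h^*$, which is not in the paper and requires bounding $L(1,\xi)$ from above via Weil and controlling the divisor sum $S(f)\leq C|\F_\f|$ (the latter needs a short multiplicativity argument and is where the stated $q\geq 2n^4$ enters for you).  The paper's argument is more elementary and entirely self-contained within the lemmas already established; yours would transplant more readily to settings where one has two-sided equidistribution.
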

\begin{proof}
Assume $q$ is odd and $q \geq 2n^4$. 
For a choice of $\alpha$ and $\phi$ made in algorithm \ref{alg1}, if the following three conditions hold, then clearly the output is $\lceil s_h/2\rceil$.
\begin{itemize}
\item $Ord(\alpha) = Ann(\phi(\F_\h))$,
\item $Ann(\phi(\F_\h)) = \chi_{\phi,\h}$,
\item $\deg(h) - \deg(h - \chi_{\phi,\h})= \lceil s_{h}/2 \rceil$.\\
\end{itemize}
For a fixed $\phi$ and a random $\alpha \in \phi(\F_\h)$, from the $A$-module decomposition of $\phi(\F_\h)$ into invariant factors, we infer that $Ord(\alpha) = Ann(\phi(\F_\h))$ with probability at least 
$$\frac{ \left|\{a \in A | \deg(a) < \deg(Ann(\phi(\F_\h))), \gcd(a, Ann(\phi(\F_\h))) =1\}\right|}{q^{\deg(Ann(\phi(\F_\h)))}} \geq (1-1/q)^{\deg(h)} \geq 1-\deg(h)/q \geq \frac{1}{2}.$$
The last inequality is a consequence of $q \geq 2\deg(h)^4$. If $\phi(\F_\h)$ is a cyclic $A$-module, then by definition $\chi_{\phi,\h} = Ann(\phi(\F_\h))$. Hence, to claim the theorem, it suffices to prove that for a Drinfeld module $\phi$ chosen at random as in algorithm \ref{alg1}, the following two conditions hold with constant probability 
\begin{itemize}
\item $\phi(\F_\h)$ is a cyclic $A$-module,
\item $\deg(h) - \deg(h - \chi_{\phi,\h})= \lceil s_{h}/2 \rceil$.\\
\end{itemize}
The proof proceeds by induction on the factors of $h$. Let $m$ denote the number of irreducible factors of $h$. Without loss of generality relabel the irreducible factors of $h$ such that  $h = \prod_{i=1}^m p_i$ where $\deg(p_1) \geq \deg(p_2) \geq \ldots \geq \deg(p_m)$. In particular, $\deg(p_m) = s_h$. Let $h_i := \prod_{j=1}^i p_j$ and $\h_i := (h_i)$. \\ \\
Induction Hypothesis: For $i<m$, assume $\phi(\F_{\h_i})$ is a cyclic $A$-module with probability at least $\left(1-\frac{\deg(h_i) }{\sqrt{q}}\right)^{i-1}$.\\ \\
The initial case $i=1$ of the induction hypothesis (that is,  $\h_1$ is prime), follows from Lemma \ref{cyclic_lemma}.\\ \\
We next lower bound the probability that $\phi(\F_\h)$ ($=\phi(\F_{\h_m})$) is $A$-cyclic and $\deg(h) - \deg(h - \chi_{\phi,\h})) = \lceil s_{h}/2 \rceil$ conditioned on $\phi(\F_{h_{m-1}})$ being $A$-cyclic. Since $g_\phi \in A$ and $\Delta_\phi \in A^\times$ are each chosen of degree less than $\deg(h)$ independently and uniformly at random (with $\gcd(\Delta_\phi,h) = 1$), by the chinese remainder theorem, the tuple $\{(g_\phi \mod \p_i, \Delta_\phi \mod \p_i)\}_i$ is distributed uniformly in $\prod_i(\F_{\p_i} \times \F_{\p_i}^\times)$. 
In particular, $(a_{\phi,\p_m},\epsilon_{\phi,\p_m})$ is independent of the structure of $\phi(\F_{\h_{m-1}})$. Consequently, instead of conditioning on $\phi(\F_{\h_{m-1}})$ being $A$-cyclic, we fix a tuple $$ \prod_{i <m} (g_\phi \mod \p_i, \Delta_\phi \mod \p_i)$$ such that $\phi(\F_{\h_{m-1}})$ is $A$-cyclic. In particular, $\chi_{\phi,\h_{m-1}}$ is fixed. In the remainder of the proof, the only randomness in $\phi$ comes from choosing $(g_\phi \mod \p_m, \Delta_\phi \mod \p_m)$ uniformly at random from $\F_{\p_m} \times \F_{\p_m}^\times$.\\ \\
As reasoned in the proof of Lemma \ref{alg_0_proof}, for $\deg(h) - \deg(h - \chi_{\phi,\h})$ to be $\lceil s_h/2 \rceil$, it suffices for  \begin{equation}\label{sum_condition} \sum_{i: \deg(\p_i) = s_h}  (a_{\phi,\p_i,\lfloor s_h/2 \rfloor} /\epsilon_{\phi,\p_i}) \neq 0 \end{equation} to hold where $a_{\phi,\p_i,\lfloor s_h/2 \rfloor} \in \F_q$ denotes the coefficient of the $t^{\lfloor s_h/2 \rfloor}$ term in $a_{\phi,\p_i}$.\\ \\
Since $\forall i \neq j, (g_\phi \mod \p_i, \Delta_\phi \mod \p_i)$ is fixed, $$\theta : = \sum_{i: \deg(\p_i) = s_h, i \neq m}  (a_{\phi,\p_i,\lfloor s_h/2 \rfloor} /\epsilon_{\phi,\p_i})$$
is fixed. Clearly equation \ref{sum_condition} holds if and only if $(a_{\phi,\p_m,\lfloor s_h/2 \rfloor} /\epsilon_{\phi,\p_m}) \neq \theta$. \\ \\
Since $\phi(\F_{\h_{m-1}})$ is $A$-cyclic, if $\phi(\F_{\p_m})$ is $A$-cyclic and $\chi_{\phi,\p_m}$ is relative prime to $\chi_{\phi,\h_{m-1}}$, then $\phi(\F_\h)$ is $A$-cyclic. Thus, to ensure $\phi(\F_\h)$ is $A$-cyclic, it suffices if $\gcd(p_m - (a_{\phi,\p_m}-1)\epsilon_{\phi,\p_m},\chi_{\phi,\h_{m-1}})=1$ and $\gcd(a_{\phi,\p_m}^2-4\epsilon_{\phi,\p_m}p_m, a_{\phi,\p_m}-2)=1$. The argument as to why is identical to the discussion in the proof of Lemma \ref{cyclic_lemma}.\\ \\
For $a \in A$, denote by $a_{\lfloor s_h/2 \rfloor} \in \F_q$ the coefficient of $t^{\lfloor s_h/2 \rfloor}$ in $a$. To summarize, the set $S_{p_m}$ of tuples $(a,\epsilon) \in A\times \F_q^\times$ that satisfy the four conditions
\begin{enumerate}[(i)]
\item $\deg(a^2-4\epsilon p_m) = \deg(p_m)$,
\item $a_{\lfloor s_h/2\rfloor} \neq \theta \epsilon$,
\item $\gcd(\epsilon p_m - (a-1),\chi_{\phi,\widehat{\h}}) = 1$,
\item $\gcd(a^2-4\epsilon p_m,a-2)=1$,
\end{enumerate} 
has the property that if $(a_{\phi,\p_m},\epsilon_{\phi,\p_m}) \in S_{\p_m}$ then $\phi(\F_\h)$ is $A$-cyclic and $\deg(h) - \deg(h - \chi_{\phi,\h}) = \lceil s_{h}/2 \rceil$.\\ \\
We estimate a lower bound on the size of $S_{\p_m}$ by choosing $a \in A$ arbitrarily of degree at most $\deg(p_m)/2$ and for each choice of $a$, picking only those $\epsilon$ such that the four conditions are satisfied.
To ensure the first condition, we need to exclude at most one choice each for $\epsilon$. To satisfy the fourth condition, for a fixed choice of $a$, we need to exclude at most $\deg(a-2) \leq \deg(p_m)/2$ choices for $\epsilon$. This is because for each monic irreducible polynomial $\ell$ dividing $a-2$, there is at most one $\epsilon \in \F_q^\times$ such that $a^2 - 4 \epsilon p_m = 0 \mod \ell$. For if there were two, then $\ell$ would divide $p_m$, which is an irreducible polynomial of degree higher than $\deg(\ell)$. To satisfy the third condition, for a fixed choice of $a \neq 1$, we need to exclude at most $\deg(\h_{m-1})$ choices for $\epsilon$. This is because for each monic irreducible polynomial $\ell$ dividing $\chi_{\phi,\h_{m-1}}$, there is at most one $\epsilon \in \F_q^\times$ such that $\epsilon p_m - (a-1) = 0 \mod \ell$. For if there were two, then $\ell$ would divide $p_m$ in which case restricting to $a \neq 1$ assures that $\epsilon p_m - (a-1) \neq 0 \mod \ell$. 
If $\theta \neq 0$, to satisfy the second condition we need to exclude at most one choice for $\epsilon$ which implies 
\begin{equation*}
\theta \neq 0 \Rightarrow \left|S_{\p_m}\right|  \geq \begin{cases}
\sqrt{|\F_{\p_m}|-1} (q-\deg(h_{m-1})-\deg(p_m)/2-3) q  &\text{if $\deg(p_m)$ is even}\\
\sqrt{|\F_{\p_m}|-1} (q-\deg(h_{m-1})-\deg(p_m)/2-3) \sqrt{q} &\text{if $\deg(p_m)$ is odd.}
\end{cases}
\end{equation*}
For $\theta=0$, the second condition can be satisfied by restricting the count to non zero $a_{\lfloor s_h/2 \rfloor}$ and we get 
\begin{equation*}
\theta = 0 \Rightarrow \left|S_{\p_m}\right|  \geq \begin{cases}
\sqrt{|\F_{\p_m}|-1} (q-\deg(h_{m-1})-\deg(p_m)/2-2) (q-1)/q  &\text{if $\deg(p_m)$ is even}\\
\sqrt{|\F_{\p_m}|-1} (q-\deg(h_{m-1})-\deg(p_m)/2-2) (q-1)/\sqrt{q} &\text{if $\deg(p_m)$ is odd.}
\end{cases}
\end{equation*}
Applying Lemma \ref{set_bound2} for $S_{\p_m}$ and assuming $q \geq 2\deg(h)^4$, we get 
$$\left|\{(\phi/\p_m)| (a_{\phi,\p_m},\epsilon_{\phi,\p_m}) \in S_{\p_m} \}\right|  \geq \left(1-\frac{\deg(h)}{\sqrt{q}}\right) |\F_{\p_m}| |\F_{\p_m}^\times|.$$
The probability that $\phi(\F_\h)$ is $A$-cyclic and $\deg(h) - \deg(h - Ann(\phi(\F_{\h}))) = \lceil s_{h}/2 \rceil$ conditioned on $\phi(\F_{\h_{m-1}})$ being $A$-cyclic is hence at least $$1-\frac{\deg(h)}{\sqrt{q}}.$$
By induction $\phi(\F_\h)$ is $A$-cyclic and $\deg(h) - \deg(h(t) - Ann(\phi(\F_{\h}))) = \lceil s_{h}/2 \rceil$ with probability at least $$\left(1-\frac{\deg(h) }{\sqrt{q}}\right)^{m} \geq 1 -\frac{m\deg(h)}{\sqrt{q}} \geq 1 -\frac{\deg(h)^2}{\sqrt{q}}$$ which is lower bounded by a constant since $q \geq 2\deg(h)^4$ and the theorem follows.
\end{proof}
\subsection{Order Finding in Finite Drinfeld Modules}\label{order_computation}
We sketch a Montecarlo randomized algorithm to compute $Ord(\alpha)$ with probability arbitrarily close to $1$ in $O(n^{(1+\omega)/2+o(1)}(\log q)^{1+o(1)} + n^{1+o(1)} (\log q)^2)$ time\footnote{We may replace $(1+\omega)/2$ with $\omega_2/2$, where $\omega_2$ is the exponent of $n \times n$ by $n \times n^2$ matrix multiplication (see \cite{ku}).}. The algorithm works for every $\phi$ with reduction at $\h$ and every $\alpha \in \phi(\F_h)$.  Fix a $\phi$ with reduction at $\h$ and an $\alpha \in \phi(\F_\h)$. Compute the minimal polynomial of the linear sequence $\{\mathcal{U}(\phi_t^j(\alpha)), j \in \Z_{\geq 0}\}$ where $\mathcal{U}:\F_\h \longrightarrow \F_q$ is a random $\F_q$ linear map. The minimal polynomial of the linear sequence divides $Ord(\alpha)$ and with probability at least half equals $Ord(\alpha)$. Hence the least common multiple of the minimal polynomials of the resulting linear sequences of $O(\log(\delta))$ independent trials is $Ord(\alpha)$ with probability at least $1- \delta$. For a trial, the minimal polynomial of a sequence can be computed in $O(n^{1+o(1)}\log q)$ time using the fast Berlekamp Massey algorithm given the first $2\deg(h)-1$ elements in the sequence. Hence the critical step is the computation of 
\begin{equation}\label{auto}
\{\mathcal{U}(\alpha),\mathcal{U}(\phi_{t}(\alpha)),\mathcal{U}(\phi_{t}^2(\alpha)), \ldots, \mathcal{U}(\phi_{t}^{{2\deg(h)-2}}(\alpha))\} 
\end{equation}
for a randomly chosen $\mathcal{U}$.
This is virtually identical to the automorphism projection problem of Kaltofen-Shoup. The difference being that the Frobenius endomorphism modulo $\h$ is replaced by the Drinfeld endomophism $\phi_t$ modulo $\h$. In Kaltofen-Shoup apart from being an $\F_q$ linear endomorphism of $\F_h$, the only property of the Frobenius exploited is that $\tau(t \mod \h)$ can be computed in $\widetilde{O}(n^{1+o(1)} (\log q)^2)$ time using the vonzur Gathen-Shoup algorithm. To adapt the automorphism projection algorithm of Kaltofen and Shoup \cite{ks}[\S~3.2] to apply in our setting, we merely have to demonstrate how to efficiently compute $\phi_t( t \mod \h)$ given $\h$, $g_{\phi} \mod \h$ and $\Delta_\phi \mod \h$. Since $$\phi_t(t \mod \h) = t ^2\mod \h+ \tau(t \mod \h) + \tau^2(t \mod \h),$$
we can compute $\phi_t( t \mod \h)$ in $O(n^{1+o(1)} (\log q)^2)$ time with three Frobenius powers and two additions modulo $h$ thereby making the Katofen-Shoup algorithm applicable to our setting.
\subsection{Obtaining the Complete Factorization from a Factor Finding Procedure}\label{complete_factorization}
In this subsection, we prove that given access to a blackbox $\mathcal{D}$ that takes as input a square free $\f \in A$ and outputs an irreducible factor, there is an $O(n^{4/3+o(1)}  (\log q)^{1+o(1)})$ expected time algorithm $\mathcal{F}$ to factor a polynomial of degree $n$ over $\F_q$ into its irreducible factors. Further, this algorithm makes at most $n^{1/3}$ calls to $\mathcal{D}$. Thereby, Corollary \ref{chi_reduction_corollary} would follow from Theorem \ref{chi_reduction_theorem}. \\ \\
Without loss of generality, assume that the input $h \in A$ to $\mathcal{F}$ is square free and of degree $n$. Obtaining the factorization of $h$ by extracting one irreducible factor at a time using $\mathcal{D}$ could in the worst case take $\Theta(\sqrt{n})$ calls to $\mathcal{D}$. A faster alternative is to use the Kaltofen-Shoup algorithm with fast modular composition to extract small degree factors of $h$ and then invoke $\mathcal{D}$ to extract the large degree factors one at a time. In particular, using \cite[Lem 8.4, Thm 8.5]{ku}, extract all the irreducible factors of $h$ of degree at most $ n^{2/3}$ in $O(n^{4/3+o(1)}  (\log q)^{1+o(1)})$ expected time. The remaining irreducible factors of $h$ each have degree at least $\lceil n^{2/3} \rceil$. Hence there are at most $n^{1/3}$ irreducible factors of $h$ remaining and the complete factorization of $h$ can be obtained by extracting a factor at a time with at most $n^{1/3}$ calls to $\mathcal{D}$. 
\begin{remark}
Kaltofen and Shoup \cite[\S~3.1]{ks} through the blackbox Berlekamp algorithm \cite{kl} reduced polynomial factorization in time nearly linear in degree to two problems that are transposes of each other, namely automorphism projection and automorphism evaluation. Being transposes, a straight line program that computes $\F_q$ linear forms in the input for one would in linear time yield a straight line program for the other of the same complexity. In particular, there is nearly linear polynomial factorization algorithm if there is a nearly linear time $\F_q$ linear solution to the automorphism projection problem. Our order finding problem is no harder than automorphism projection. We hence arrive at the stronger assertion that polynomial factorization is reducible to automorphism projection. In particular, no assumptions on the $\F_q$ linearity of automorphism projection algorithm is made. We must however remark that the automorphism projection we consider (see equation \ref{auto}) is broader than that stated in \cite[\S~3.2]{ks} where only the Frobenius automorphism is considered.
\end{remark}
\subsection{Degree Estimation Using Carlitz Modules}
The degree estimation algorithm framework also gives rise to variants where there is no randomization with respect to the choice of Drinfeld modules. In fact, the following deterministic example from the author's Ph.D thesis \cite{nar} using Carlitz modules (rank $1$ Drinfeld modules) partly motivated the randomized version. The Carlitz module based algorithm is suited to the case when the characteristic of $\F_q$ does not divide the number of factors of the smallest degree. 
\begin{example} \textbf{Factor Degree Estimation using Carlitz Modules.}
\ \\ \textbf{Input :} Monic square free reducible polynomial $h \in A$.
\begin{enumerate}
\item Choose the Carlitz module $\phi$ (the rank $1$ Drinfeld module $\phi$ with $g_\phi = 1$ and $\Delta_\phi=  0$). 
\item Compute $\chi_{\phi,\h}$.
\item \textbf{Output:} $\deg(h) - \deg(h - \chi_{\phi,\h})$.
\end{enumerate}
In \cite{nar} it is shown that the output is exactly $s_h$ provided the number of factors of degree $s_h$ is not divisible by the characteristic of $\F_q$. This is true since for the Carlitz module $\phi$, for all prime ideals $\p \subset A$, $\chi_{\phi,\p} = p-1$ where $p$ is the monic generator of $\p$.\\ \\ 
\noindent Curiously for the Carlitz module $\phi$, finding $\chi_{\phi,h}$ is easily seen to be no harder than factoring $h$. Computing $\chi_{\phi,h}$ is linear time reducible to factoring $h$ since given the factorization of $h$, it is trivial to write down $\chi_{\phi,h}$ in $O(\deg(h) \log q)$ time. 
\end{example}
\section{Factorization Patterns of Polynomials in Small Intervals}\label{factor_distribution}
\noindent Our analysis of the Drinfeld module analogue of the black-box Berlekamp algorithm relies on the degree distribution in factorization patterns of polynomials in short intervals which we study in this section.\\ \\
For a partition $\lambda$ of a positive integer $e$, let $C_\lambda:=\{\sigma \in S_e | \lambda_\sigma = \lambda \}$ denote its conjugacy class where $S_e$ is the symmetric group on $e$ elements and $\lambda_\sigma$ is the partition of $e$ induced by the factorization of $\sigma$ into disjoint cycles. Let $P(\lambda):= |C_\lambda|/|S_e|$. For a polynomial $f \in A$, let $\lambda(f)$ denote the partition of $\deg(f)$ induced by the degrees of the irreducible factors in the factorization of $f$ in $A$.\\ \\
For $f \in A$ and a positive integer $m < \deg(f)$, define the interval around $f$ corresponding to the degree bound $m$ as $$ \mathcal{I}_m(f) : = \{f + a | a \in A , \deg(a) \leq m\}.$$ For $\mathcal{I} \subset A$ where each polynomial in $\mathcal{I}$ is of degree exactly $d>1$ and a partition $\lambda$ of $d$, define $$B_{q}(\mathcal{I},\lambda) : = \left\{a \in \mathcal{I} | \lambda(a) = \lambda\right\}\ \ \ and\ \ \ \pi_{q}(\mathcal{I},\lambda) : = |B_{q}(\mathcal{I},\lambda)|.$$
Bank, Bary-Soroker and Rosenzweig \cite{bbr} recently proved the following theorem when the field size $q$ tends to infinity while $d$ is fixed.
\begin{theorem}\label{conj}(\cite[Thm 1]{bbr}) For all monic $f \in A$ of fixed degree $d$, for all positive integers $2<m<d$ and for all partitions $\lambda$ of $d$,
$$\pi_{q}(\mathcal{I}_m(f),\lambda) \sim P(\lambda) |\mathcal{I}_m(f)|\ \ \ \ as\ \ \ \  q \rightarrow \infty.$$
\end{theorem}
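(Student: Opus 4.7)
The plan is to realize $\mathcal{I}_m(f)$ as the set of $\F_q$-specializations of a generic polynomial and then invoke a function-field analogue of the Chebotarev density theorem. Concretely, introduce indeterminates $A_0,\ldots,A_m$ and form
$$F(T; A_0, \ldots, A_m) := f(T) + A_m T^m + \cdots + A_1 T + A_0 \ \in\ \F_q[A_0,\ldots,A_m][T],$$
so that $(a_0,\ldots,a_m) \in \F_q^{m+1}$ recovers an arbitrary element of $\mathcal{I}_m(f)$. Let $\widetilde{G}$ denote the Galois group of the splitting field of $F$ over $\F_q(A_0,\ldots,A_m)$ acting as a permutation group on the $d$ roots, and let $G \trianglelefteq \widetilde{G}$ be the corresponding geometric Galois group. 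Once we establish $G = \widetilde{G} = S_d$, the function-field Chebotarev density theorem (via Lang--Weil and \'etale-cohomological bounds on the attached $S_d$-cover of $\mathbb{A}^{m+1}$) yields
$$\pi_{q}(\mathcal{I}_m(f),\lambda) = P(\lambda)\, q^{m+1} + O_d\bigl(q^{m+1/2}\bigr)$$
for each partition $\lambda$ of $d$, with the implied constant depending only on $d$. Dividing by $|\mathcal{I}_m(f)| = q^{m+1}$ and letting $q \to \infty$ with $d$ fixed then gives the claimed asymptotic.

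The core of the argument is thus to show $G = S_d$. I would proceed in three substeps. First, geometric irreducibility of $F$ in $T$ (equivalent to the transitivity of $G$) follows from the fact that $F$ is linear in $A_0$: any nontrivial factorization in $T$ over $\overline{\F_q}(A_0,\ldots,A_m)$, when re-expanded in powers of $A_0$, would force a contradiction on degrees. Second, to produce a transposition in $G$, I would restrict to a one-parameter subfamily (say, specialize $A_2,\ldots,A_m$ generically and let only $A_0,A_1$ vary) and check that the $T$-discriminant of $F$ acquires a simple zero at some place of the parameter curve; the inertia at that place acts on the roots as a transposition. Third, primitivity of $G$ can be verified by adjoining a single root to $\overline{\F_q}(A_0,\ldots,A_m)$ and showing the residual factor remains irreducible, for which the hypothesis $m > 2$ provides just enough free parameters to rule out a nontrivial block system. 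A transitive, primitive subgroup of $S_d$ containing a transposition equals $S_d$, so $G = S_d$; since no new constants are introduced by the $A_i$ (the constant field remains $\F_q$), we also have $\widetilde{G} = G = S_d$.

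With $\widetilde{G} = S_d$, Frobenius conjugacy classes in $\widetilde{G}$ are in bijection with partitions of $d$, and for each specialization $(a_0,\ldots,a_m)$ the Frobenius cycle type is precisely $\lambda(f+a)$. Summing the Chebotarev equidistribution over the conjugacy class $C_\lambda$ of density $|C_\lambda|/|S_d| = P(\lambda)$ completes the proof.

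The principal obstacle will be the group-theoretic step of identifying $G$ with $S_d$ under the restriction that only the bottom $m+1$ coefficients of the polynomial vary. When $m$ sits near its lower bound, the available specializations become scarce and the production of a transposition together with the verification of primitivity must be carried out delicately; an alternative route would be to appeal to pre-existing theorems on Galois groups of ``Morse-type'' shifted polynomial families (going back to work of S.~D.~Cohen and others), which directly supply $S_d$ for generic perturbations of a fixed polynomial by low-degree terms and would bypass the need to engineer specific specializations by hand.
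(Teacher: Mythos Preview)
Your overall strategy is correct and matches what the paper does. Note, however, that this particular theorem is not proved in the present paper at all: it is quoted verbatim from \cite{bbr}. What the paper \emph{does} prove is the effective version, Theorem~\ref{small_interval_bound}, and there the argument runs exactly along the lines you describe: form the generic polynomial $\ef_f = f(t) + \sum_i x_i t^{i-1}$, identify the geometric Galois group with $S_{\deg(f)}$, and apply a higher-dimensional Chebotarev/Lang--Weil estimate (Lemma~\ref{density}).

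The one substantive difference is in the handling of the Galois-group computation. You propose to establish $G = S_d$ from scratch via transitivity, a transposition from a simple discriminant zero, and primitivity. The paper instead simply invokes \cite[Prop.~3.6]{bbr} for the statement $\mathrm{Gal}(F_f\cdot\overline{\F}_q / \overline{\F}_q(x_1,\ldots,x_m)) \cong S_{\deg(f)}$, and then deduces $\widetilde{G} = G$ by sandwiching. Your hands-on route is fine in principle, but be aware that the primitivity step and the production of a transposition for a family varying only the bottom $m+1$ coefficients are exactly the delicate parts of \cite{bbr}; your ``alternative route'' of citing existing Morse-type results is in fact what both \cite{bbr} and this paper do, so there is no real saving in trying to redo it by hand.
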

\noindent It is widely conjectured (see \cite{bbr}) that 
\begin{conjecture}\label{smoothness_conjecture}
For all monic $f \in A$ of degree $d$ such that $3<d<\sqrt{q}/2$ and for all partitions $\lambda$ of $d$,
$$\pi_{q}(\mathcal{I}_m(f),\lambda) \sim P(\lambda) |\mathcal{I}_m(f)|\ \ \ \ as\ \ \ \  q^d \rightarrow \infty.$$
\end{conjecture}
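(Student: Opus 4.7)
The strategy is to adapt the character sum / Galois-theoretic approach underlying Theorem~\ref{small_interval_bound_intro} to the wider range $d<\sqrt q/2$. For $a\in A$ with $\deg a\le m$, the factorization type $\lambda(f+a)$ is exactly the cycle type of the Frobenius conjugacy class $\mathrm{Frob}(f+a)\in S_d$ acting on the roots of $f+a$. Writing
$$\pi_{q}(\mathcal{I}_m(f),\lambda)\;=\;\sum_{\deg a\le m}\mathbf{1}_{C_\lambda}\bigl(\mathrm{Frob}(f+a)\bigr)$$
and expanding the class function $\mathbf{1}_{C_\lambda}$ in the orthonormal basis of irreducible characters of $S_d$, the trivial character contributes exactly $P(\lambda)|\mathcal{I}_m(f)|$, so the conjecture is equivalent to the bound
$$\sum_{\deg a\le m}\chi\bigl(\mathrm{Frob}(f+a)\bigr)\;=\;o\bigl(q^{m+1}\bigr)\qquad(q^d\to\infty)$$
for every nontrivial irreducible character $\chi$ of $S_d$.

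To bound these character sums I would parameterize the family $\{f+a\}$ by $a\in\mathbb{A}^{m+1}$, attach to it the \'etale local system whose fiber at $a$ is the permutation representation of $\mathrm{Frob}(f+a)$ on the $d$ roots of $f+a$, and apply the Grothendieck--Lefschetz trace formula combined with Deligne's equidistribution theorem. Provided the geometric monodromy group of this family equals $S_d$ (a key input that must be verified for the short-interval family, not merely the generic family), Deligne's theorem yields the power-saving estimate
$$\Bigl|\sum_{\deg a\le m}\chi\bigl(\mathrm{Frob}(f+a)\bigr)\Bigr|\;\ll\;N_\chi\,q^{(m+1)/2},$$
where $N_\chi$ is a sum of Betti numbers of the associated sheaf. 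Dividing by $|\mathcal{I}_m(f)|=q^{m+1}$ gives a relative error of $N_\chi\,q^{-(m+1)/2}$, so the conjecture would follow once $N_\chi=o(q^{(m+1)/2})$.

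The main obstacle, and the reason the conjecture remains open, is quantitative control of $N_\chi$ as a function of $d$. With the tools that underpin Theorem~\ref{small_interval_bound_intro} one obtains bounds of the form $N_\chi\le C(d)$ with $C(d)$ essentially of order $d!$; this is exactly what forces the very strong condition $\log q\ge 3d\log d$. To reach $d<\sqrt q/2$ one would need either a sharp analysis of Swan conductors and wild ramification along the short-interval family, reducing $C(d)$ to at most a polynomial in $d$, or a sieve-theoretic approach that sidesteps Deligne's theorem, perhaps modelled on function-field analogues of Maynard's short-interval sieving. I expect the first route to be the more serious obstacle, since it is a genuine question about the geometry of degenerations of $S_d$-covers over short $t$-adic disks and does not appear accessible to current cohomological techniques.
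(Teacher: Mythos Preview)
The statement you are addressing is Conjecture~\ref{smoothness_conjecture}, and the paper does \emph{not} prove it: it is introduced with ``It is widely conjectured'' and left open. What the paper does prove is the much weaker Theorem~\ref{small_interval_bound}, valid only in the range $\log q \ge 5 d\log d$, via an effective Lang--Weil bound applied to the argument of \cite{bbr}. So there is no ``paper's own proof'' to compare against.

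Your proposal is not a proof either, and to your credit you say so explicitly: you identify the obstruction as the growth of the Betti-number constant $N_\chi$, note that current techniques give $N_\chi$ of order roughly $d!$ (which is exactly what forces the restriction $\log q \gtrsim d\log d$ in Theorem~\ref{small_interval_bound}), and correctly observe that reaching $d<\sqrt{q}/2$ would require reducing this to polynomial in $d$. That diagnosis is accurate and matches the gap between Theorem~\ref{small_interval_bound} and Conjecture~\ref{smoothness_conjecture}. But an outline that ends with ``this does not appear accessible to current cohomological techniques'' is a discussion of the problem, not a resolution of it. If your intent was to prove the conjecture, there is a genuine gap: you have not supplied the needed Betti-number bound, nor an alternative argument that avoids it. If your intent was merely to explain the difficulty, you have done that well, but it should be framed as commentary rather than as a proof proposal.
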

\noindent  In the next subsection \S~\ref{density_subsection}, by applying an effective Lang-Weil bound to the argument in \cite{bbr}, we prove an effective version of  Theorem \ref{conj} that holds for $\log q >  5 d \log(d)$.

\subsection{A High Dimensional Variant of the Function Field Chebotarev Density Theorem}\label{density_subsection}
\noindent Let $E$ denote the rational function field $\F_q(t_1,\ldots,t_m)$ in the $m$ indeterminates $t_1,\ldots,t_m$. Let $F/E$ be a finite Galois extension of $E$. Fix an algebraic closure $\overline{\F}_q$ of $\F_q$ and let $$\alpha:Gal(F/E) \longrightarrow Gal((\overline{\F}_q \cap F)/\F_q)$$ denote the restriction map. 
Let $V = Spec(\F_q[t_1,\ldots,t_m])$ and let $V_{ur}(\F_q) \subset V(\F_q)$ denote the subset of $\F_q$ rational places in $V$ that are etale in the extension $F/E$. Let $O_F$ denote the integral closure of $\F_q[t_1,\ldots,t_m]$ in $F$ and let $W = Spec(O_F)$. 
For a place $\B \in W$ lying above a place $\p \in V$ that is etale in $F/E$, let $\sigma_\B \in Gal(F/E)$ denote its Artin symbol. For a place $\p \in V$ that is etale in $F/E$, let $$\Theta_\p:=\{\sigma_\B | \B \in W, \B|\p\} \subseteq ker(\alpha)$$ denote the conjugacy class of Artin symbols above $\p$. 
\begin{lemma}\label{density}
If $q\geq 2(m+1)[F:E]^2$, for every conjugacy class $\Theta \subseteq ker(\alpha)$, $$ \left| | \{ \p \in P_{ur}(\F_q) | \Theta_\p = \Theta \}| -\frac{|\Theta|}{| ker(\alpha)|}q^m \right| \leq \frac{|\Theta|}{|ker(\alpha)|}\left(([F:E]-1)([F:E]-2)\frac{q^m}{\sqrt{q}} + 5 [F:E]^{13/3}q^{m-1}\right).$$
\end{lemma}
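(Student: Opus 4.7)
The plan is to translate the Chebotarev count into an $\F_q$-point count on an $\F_q$-twist of the étale cover, and then apply an effective Lang--Weil bound in dimension $m$. Write $n = [F:E]$ and $G = \mathrm{Gal}(F/E)$; the main case is $\ker(\alpha) = G$ (i.e., $W$ geometrically irreducible), with the general case reduced to it by passing to the geometric extension $F \cap \overline{\F}_q$ of $\F_q$ and summing over Frobenius orbits of components. Fix a representative $\sigma \in \Theta$. Let $V^\circ \subseteq V$ be the étale locus of $W \to V$ and set $W^\circ = W \times_V V^\circ$; the complement $V \setminus V^\circ$ has degree $O(n)$ in $\A_{\F_q}^m$, so its $\F_q$-points contribute only $O(n\, q^{m-1})$ and are absorbed into the error.

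Construct the $\sigma$-twist $W^{\circ,\sigma}$: the $\F_q$-scheme geometrically isomorphic to $W^\circ \otimes_{\F_q} \overline{\F}_q$ whose descent datum is that of $W^\circ$ composed with $\sigma$. This is well-defined over $\F_q$ precisely because $\sigma \in \ker(\alpha)$ acts trivially on the field of constants $F \cap \overline{\F}_q$, and by construction $W^{\circ,\sigma}(\F_q) = \{w \in W^\circ(\overline{\F}_q) : \mathrm{Frob}_q(w) = \sigma \cdot w\}$. The standard orbit-counting argument---using that $G$ acts simply transitively on each geometric fiber of $W^\circ \to V^\circ$ and conjugates Artin symbols within it---shows that the fiber above a given $\p \in V_{ur}(\F_q)$ contributes $|Z_G(\sigma)| = |G|/|\Theta|$ points to $W^{\circ,\sigma}(\F_q)$ exactly when $\Theta_\p = \Theta$, and zero otherwise. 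Summing over $\p$ yields the key identity
$$|W^{\circ,\sigma}(\F_q)| \;=\; \frac{|G|}{|\Theta|}\, \bigl|\{\p \in V_{ur}(\F_q) : \Theta_\p = \Theta\}\bigr|.$$

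Present $F/E$ by a primitive element $\theta$ to realize $W$ as a closed subvariety of $\A_{\F_q}^{m+1}$ of degree at most $n$, an embedding inherited by the twist $W^{\circ,\sigma}$. Invoke the effective Lang--Weil bound of Cafure--Matera: for any geometrically irreducible affine $\F_q$-variety $X$ of dimension $m$ and degree $d$,
$$\bigl||X(\F_q)| - q^m\bigr| \;\leq\; (d-1)(d-2)\, q^{m-1/2} + 5\, d^{13/3}\, q^{m-1}.$$
Applied with $X = W^{\circ,\sigma}$ and $d \leq n$, and substituted into the orbit-counting identity above (with the factor $|G|/|\ker(\alpha)|$ appearing when the geometric components of $W$ are distributed across Frobenius orbits), this yields the bound stated in the lemma. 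The hypothesis $q \geq 2(m+1)n^2$ ensures that the boundary contribution from the ramification locus is dominated by the Lang--Weil error term, and that the two error terms $(n-1)(n-2) q^{m-1/2}$ and $5\, n^{13/3} q^{m-1}$ remain meaningful relative to the main term $q^m$.

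The main obstacle lies in constructing $W^{\circ,\sigma}$ rigorously with an affine embedding of degree controlled uniformly by $n$, and in handling the geometrically reducible case where $W \otimes \overline{\F}_q$ splits into $[G:\ker(\alpha)]$ Frobenius-conjugate components; there one must apply Lang--Weil to each component individually, carefully tracking the factor $[G:\ker(\alpha)]$ so that the main term $\frac{|\Theta|}{|\ker(\alpha)|}q^m$ emerges correctly from $\frac{|\Theta|}{|G|}|W^{\circ,\sigma}(\F_q)|$.
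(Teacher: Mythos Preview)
Your approach is essentially the same as the paper's: construct an $\F_q$-twist of the cover on which $\F_q$-points count exactly the unramified places $\p$ with $\Theta_\p = \Theta$ (each with multiplicity $|\ker(\alpha)|/|\Theta|$), then apply the effective Lang--Weil bound of Cafure--Matera. The paper sidesteps the ``main obstacle'' you flag---the rigorous construction of the twist with controlled degree and the handling of the geometrically reducible case---by invoking \cite[Prop.~2.2]{bs}, which packages all of this into a black box: it directly produces a smooth irreducible affine $\F_q$-variety $\widehat{W}$ and a finite separable morphism $\pi:\widehat{W}\to V$ with $\deg(\pi)=|\ker(\alpha)|$ and fibers of size $|\ker(\alpha)|/|\Theta|$ over exactly the places in $U$, and notes that $\widehat{W}$ and $W$ are twists so $\widehat{W}$ inherits dimension $m$ and degree $[F:E]$.
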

\begin{proof}
Fix a conjugacy class $\Theta \subseteq ker(\alpha)$ and let $U := \{ \p \in P_{ur}(\F_q) | \Theta_\p = \Theta \}$.\\ \\
Let $\rho: W \longrightarrow V$ denote the norm map from $W$ down to $V$. Applying \cite{bs}[Prop. 2.2] to (V,W,$\rho,\Theta)$ implies the existence of a smooth irreducible affine $\F_q$-variety $\widehat{W}$ and a finite separable morphism $\pi:\widehat{W} \longrightarrow V$ such that
\begin{enumerate}[(i)]
\item $\pi(\widehat{W}) = U$,
\item $\deg(\pi) = |ker(\alpha)|$,
\item $\forall \p \in U, |\pi^{-1}(\p) \cap \widehat{W}(\F_q)| = |ker(\alpha)|/|\Theta|$.
\end{enumerate}
Since $\pi:\widehat{W} \longrightarrow V$ is finite, $\widehat{W}$ and $V$ have the same dimension, namely $m$. Further, $W$ and $\widehat{W}$ are twists of each other \cite{bs}. As a consequence, $W$ and $\widehat{W}$ have the same degree, namely $[F:E]$.\\ \\
Bounding the size of $\widehat{W}(\F_q)$ using an effective Lang-Weil bound \cite{cm}, 
$$\left| |\widehat{W}(\F_q)| - q^m \right| \leq ([F:E]-1)([F:E]-2)\frac{q^m}{\sqrt{q}} + 5 [F:E]^{13/3}q^{m-1}.$$
Since $\pi(\widehat{W}) = U$ and $\forall \p \in U, |\pi^{-1}(\p) \cap \widehat{W}(\F_q)| = |ker(\alpha)|/|\Theta|$, $|\widehat{W}(\F_q)| = \frac{|ker(\alpha)||U|}{|\Theta|}$
$$\Rightarrow  \left| |U| - \frac{|\Theta|}{|ker(\alpha)|}q^m \right| \leq \frac{|\Theta|}{|ker(\alpha)|}\left(([F:E]-1)([F:E]-2)\frac{q^m}{\sqrt{q}} + 5 [F:E]^{13/3}q^{m-1}\right).$$
\end{proof}

\begin{theorem}\label{small_interval_bound} For every positive integer $m \geq 2$, for every monic $f \in A$ of degree greater than $m$ and for every partition $\lambda$ of $\deg(f)$, if $\log q \geq 5 \deg(f) \log(\deg(f))$ then
$$\left|  \pi_{q}(\mathcal{I}_{m}(f),\lambda) - P(\lambda) |\mathcal{I}_{m}(f)| \right| \leq \frac{1}{2}  P(\lambda) |\mathcal{I}_{m}(f)|.$$
\end{theorem}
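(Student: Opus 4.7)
The plan is to derive Theorem~\ref{small_interval_bound} by applying the effective high-dimensional Chebotarev bound of Lemma~\ref{density} to the Galois cover over affine $(m+1)$-space arising from the universal polynomial parametrizing $\mathcal{I}_m(f)$. This is the effective upgrade of the qualitative argument of Bank--Bary-Soroker--Rosenzweig~\cite{bbr}.

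First I parametrize $\mathcal{I}_m(f)$ by $\F_q^{m+1}$ via $(c_0,\ldots,c_m) \mapsto f + c_m t^m + \cdots + c_0$, and introduce the universal polynomial
\[ F(T) = f(T) + x_m T^m + \cdots + x_0 \in E[T], \qquad E = \F_q(x_0,\ldots,x_m). \]
Let $L$ be the splitting field of $F(T)$ over $E$ and set $d = \deg f$. The key Galois-theoretic input is that $\mathrm{Gal}(L/E) \cong S_d$ and the constant field extension is trivial, namely $\overline{\F}_q \cap L = \F_q$. Following~\cite{bbr}, I would establish this by specializing $x_2,\ldots,x_m$ to generic values in $\overline{\F}_q$ and reducing to the two-parameter pencil $h(T) + x_1 T + x_0$ for a fixed monic $h \in \overline{\F}_q[T]$ of degree $d$, whose Galois group is classically $S_d$: one produces a transposition by specializing to a polynomial with a unique double root, and a $d$-cycle by specializing to an irreducible polynomial, and these together generate $S_d$.

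With the Galois group identified, Lemma~\ref{density} applies to $V = \mathbb{A}^{m+1}_{\F_q}$, since its hypothesis $q \geq 2(m+1)[L:E]^2 = 2(m+1)(d!)^2$ is comfortably implied by $\log q \geq 5d\log d$ together with $m < d$. For the conjugacy class $\Theta = C_\lambda \subset S_d$, the lemma yields
\[ \Bigl| |U_\Theta| - P(\lambda)\, q^{m+1} \Bigr| \leq P(\lambda)\Bigl( (d!-1)(d!-2)\, q^{m+1/2} + 5\, (d!)^{13/3}\, q^{m} \Bigr), \]
where $U_\Theta \subseteq \F_q^{m+1}$ is the set of specializations whose Artin symbol lies in $\Theta$. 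By Kummer--Dedekind, at a separable specialization the Artin symbol encodes exactly the partition of $d$ given by the degrees of irreducible factors of $f + c_m t^m + \cdots + c_0$; the non-separable specializations lie on the discriminant hypersurface and contribute at most $O(d^2 q^m)$, which is absorbed into the error term. Dividing by $P(\lambda) q^{m+1}$, the desired inequality reduces to verifying $(d!)^2/\sqrt{q} + 5(d!)^{13/3}/q \leq 1/2$, which is immediate from $\log q \geq 5 d \log d$ using $d! \leq d^d$.

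The main obstacle is the Galois-group calculation, which must hold for \emph{every} monic $f$ of degree $d > m$ without the freedom to choose $f$: a priori, hidden symmetries of $f$ could force $\mathrm{Gal}(L/E)$ into a proper subgroup of $S_d$. The resolution is that the $m+1 \geq 3$ free linear perturbations $x_0, x_1, x_2$ provide enough flexibility for a Hilbert-irreducibility-type argument to yield the full symmetric group regardless of $f$. Once this is in hand, identifying Artin symbols with factorization patterns and verifying the error estimate are routine.
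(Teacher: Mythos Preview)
Your proposal is correct and follows essentially the same route as the paper: set up the universal polynomial over $E=\F_q(x_0,\ldots,x_m)$ parametrizing $\mathcal{I}_m(f)$, invoke~\cite{bbr} to identify the geometric Galois group with $S_d$, apply Lemma~\ref{density}, discard the discriminant locus, and check that $\log q\geq 5d\log d$ absorbs the Lang--Weil error. The only cosmetic discrepancy is that the paper works with $m$ indeterminates rather than your $m{+}1$ (an indexing choice), and your bookkeeping with $d!$ in the error term is in fact tidier than the paper's, which at one point writes $\deg(f)$ where $[F_f:E]=\deg(f)!$ is meant.
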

\begin{proof}
Fix a monic non constant polynomial $f \in A$ of degree at least $m$ and let $$\ef_f := f(t) + \sum_{i=1}^{m} x_i t^{i-1}.$$ Since the indeterminate $x_1$ only appears in $\ef_f$ as the linear term $x_1$, $\ef_f$ is absolutely irreducible and separable in $t$. Thus the splitting field $F_f$ of $\ef_f$ over $E = \F_q(x_1,\ldots,x_m)$ is Galois. We will shortly apply Lemma \ref{density} to the extension $F_f/E$. Before doing so, we argue that $F_f/E$ is a geometric extension.\\ \\
The splitting field of $\ef_f$ over $\overline{\F}_q(x_1,\ldots,x_m)$ is the composite $F_f.\overline{\F}_q$ and we have
$$Gal(F_f.\overline{\F}_q/\overline{\F}_q(x_1,\ldots,x_m)) = Gal(F_f.\overline{\F}_q/\overline{\F}_q(x_1,\ldots,x_m)) \leq Gal(F_f/E) \leq S_{\deg(f)}.$$
By \cite{bbr}[Prop 3.6], $Gal(F_f.\overline{\F}_q/\overline{\F}_q(x_1,\ldots,x_m)) \cong S_{\deg(f)}$
$$\Rightarrow Gal(F_f.\overline{\F}_q/\overline{\F}_q(x_1,\ldots,x_m)) \cong Gal(F_f/E) \cong S_{\deg(f)} \Rightarrow F_f \cap \overline{\F}_q = \F_q.$$
Hence $F_f/E$ is a geometric extension. Since $Gal(F_f \cap \overline{\F}_q/\F_q)$ is trivial, the restriction map $$\alpha_f : Gal(F_f/E) \longrightarrow Gal(F_f \cap \overline{\F}_q/\F_q)$$ has kernel $$ker(\alpha_f)  = Gal(F_f/E)\cong S_{\deg(f)}.$$
Since $Gal(\overline{\F}_q/\F_q) = \langle \tau \rangle \cong \Z$ where $\tau$ is the $q^{th}$ power Frobenius, homomorphisms from $Gal(\overline{\F}_q/\F_q)$ to $S_{\deg(f)}$ are parametrized by the permutations in $S_{\deg(f)}$ they map $\tau$ to. That is, $\sigma \in S_{\deg(f)}$ corresponds to $\theta_\sigma \in Hom(Gal(\overline{\F}_q/\F_q), S_{\deg(f)})$ that takes $\tau$ to $\sigma$. \\ \\
Fix a partition $\lambda$ of $\deg(f)$. For the conjugacy class $\Theta_{\lambda} : = \{\theta_\sigma| \sigma \in C_\lambda\} \subseteq ker(\alpha_f)$, 
\begin{equation}\label{eqn1} \frac{|\Theta_\lambda|}{|\ker(\alpha)|}  = \frac{|C_\lambda|}{|S_{\deg(f)}|} = P(\lambda)\end{equation} where the first equality follows from the fact that $ker(\alpha)\cong S_{\deg(f)}$ and $|\Theta_\lambda|= |C_\lambda|$ and the second equality follows from the definition of $P(\lambda)$.\\ \\
Equation \ref{eqn1} together with Lemma \ref{density} applied to the extension $F_f/E$ yields 
$$ \left| | \{ \p \in V_{ur,f}(\F_q) | \Theta_\p = \Theta_\lambda \}| -P(\lambda)q^m \right| \leq 2P(\lambda)[F_f:E]^{m+1}q^{m/2}$$
where $V_{ur,f}(\F_q) \subseteq V(\F_q)$ is the set of $\F_q$-rational places in $E$ that are etale in $F_f$. 
Identifying $V(\F_q)$ with $\A^m(\F_q)$, a prime $\p=(a_1,\ldots,a_m) \in V_{ur,f}(\F_q)$ has $\Theta_\p = \Theta_\lambda$ if and only if $\lambda(\ef_f(a_1,\ldots,a_m,t)) \in C_\lambda$ (\cite[Lem 2.1]{bs}). Thus 
$$ \left| | \{ (a_1,\ldots,a_m) \in V_{ur,f}(\F_q)  |  \lambda(\ef_f(a_1,\ldots,a_m,x)) = \lambda \}| -P(\lambda)q^m \right| $$ $$\leq P(\lambda)\left((\deg(f)-1)(\deg(f)-2)\frac{q^m}{\sqrt{q}} + 5 \deg(f)^{13/3}q^{m-1}\right).$$
For $q>20\deg(f)^5$, $$\frac{(\deg(f)-1)(\deg(f)-2)}{\sqrt{q}} + \frac{5 \deg(f)^{13/3}}{q} \leq \frac{1}{2},$$
$$ \Rightarrow \frac{P(\lambda)}{2} |\mathcal{I}_{m}(f)| \leq  \pi_{q}(\mathcal{I}_{m}(f),\lambda) \leq  \frac{3P(\lambda)}{2} |\mathcal{I}_{m}(f)|  + \pi_q^{ra}(\mathcal{I}_{m}(f),\lambda)$$
where $$\pi_q^{ra}(\mathcal{I}_{m}(f),\lambda) : = \ \{ (a_1,\ldots,a_m) \in V(\F_q) \setminus V_{ur,f}(\F_q)  |  \lambda(\ef_f(a_1,\ldots,a_m,x)) = \lambda \}|$$
accounts for ramified primes with the factorization pattern corresponding to $\lambda$. The ramified part $\pi_q^{ra}(\mathcal{I}_{m}(f),\lambda)$ is bounded by the number of $\F_q$ points in the variety defined by the discriminant $\Delta_t(\ef_f) \in \F_q[x_1,\ldots,x_m]$ of $\ef_f$ with respect to $t$.  By expressing $\Delta_t(\ef_f)$ as the resultant of $\ef_f$ and its derivative with respect to $t$, we see, $\deg(\Delta_t(\ef_f)) \leq 2[F_f:E]-1$. Applying an effective version of the Lang-Weil bound \cite{cm}, $\pi_q^{ra}(\mathcal{I}_{m}(f),\lambda)$ turns out to be negligible in our computation.\\ \\
Since $[F_f:E] = \deg(f)!$, for $\log q > 5 \deg(f) \log(\deg(f))$,
$$ \frac{1}{2}P(\lambda) |\mathcal{I}_{m}(f)|\leq  \pi_{q}(\mathcal{I}_{m}(f),\lambda) \leq  \frac{3}{2}P(\lambda) |\mathcal{I}_{m}(f)|$$
and the theorem follows.
\end{proof}

\begin{remark} Recent break through algorithms for discrete logarithm computation \cite{jou}\cite{bgjt} over a small characteristic finite field (say $\F_{r^d}$) have the following initial polynomial search step. Given $r$ and $d$, search for $h_0,h_1 \in \F_{r^2}[t]$, each of degree $2$ such that the factorization of $h_1t^r-h_0$ over $\F_{r^2}[t]$ has an irreducible factor of degree $d$. The search is known to succeed only under heuristic assumptions. If Theorem \ref{small_interval_bound} were true for $q  \geq (n-1)^2$, then as a corollary (by setting $q=r^2$, $f=t^{r+1}$ and $m=2$), the search provably succeeds (even when $h_1$ is fixed as $h_1 = t$) without making any heuristic assumptions. More generally, if Theorem \ref{small_interval_bound} holds for $q  \geq c_1n^{c_2}$ for some positive absolute constants $c_1,c_2$, then the heuristic assumptions in the polynomial selection step (with appropriate modifications) may be removed.
\end{remark}
\section{Drinfeld Module Analog of Berlekamp's Algorithm}\label{blackbox_section}
\noindent We motivate the Drinfeld module analog of Berlekamp's algorithm with a brief description of Lenstra's algorithm for integer factorization. Pollard's \textit{p-1} algorithm \cite{pol} is designed to factor an integer that has a prime factor modulo which the multiplicative group has smooth order. Say for instance that a positive integer $n$ has a prime factor $p$ such that every prime power factor of $p-1$ is bounded by $b$. The algorithm proceeds by choosing a positive integer $B$ as the smoothness bound and computes $m$, the product of all prime powers bounded by $B$. A positive integer $a < n$ is then chosen at random. Assume $a$ is prime to $n$ for otherwise $\gcd(a,n)$ is a non trivial factor of $n$. If $B \geq b$, since $p-1$ divides $m$, $$a^m-1 = (a^{p-1})^{m/(p-1)} - 1 \cong 0 \mod p \Rightarrow p \mid a^m-1$$ and $\gcd(a^{m}-1,n)$ is likely a non trivial factor of $n$.\\ \\
The running time is at least exponential in the size of $B$. For typical $n$, $B$ needs to be as big as the smallest factor of $n$ and thus the running time is typically exponential in the size of the smallest factor of $n$.\\ \\ 
Lenstra's elliptic curve factorization algorithm \cite{len} factors every $n$ in (heuristic) expected time sub-exponential in the size of the smallest factor $p$ of $n$. A key insight of Lenstra was to substitute the multiplicative group $(\Z/p\Z)^\times$ in Pollard's \textit{p-1} algorithm with the group $E(\F_p)$ of $\F_p$ rational points of a random elliptic curve $E$ over $\F_p$. The running time depends on the smoothness of the group order $|E(\F_p)|$ for a randomly chosen $E$. The Hasse-Weil bound guarantees that $||E(\F_p)| - (p+1) | \leq 2\sqrt{p}$ and Lenstra proved that his algorithm runs in expected time subexponential in the size of $p$ assuming a heuristic on the probability that a random integer in the interval $[p+1-2\sqrt{p}, p+1 + 2\sqrt{p}]$ is smooth.\\ \\
Our algorithm can be thought of as an analogue of Berlekamp's algorithm wherein the Frobenius action is replaced with a random rank-2 Drinfeld action; much like Lenstra's algorithm is an analogue of Pollard's $p-1$ obtained through replacing the multiplicative group modulo a prime with a random elliptic curve group. Before outlining the algorithm, a few remarks regarding notation are in order. For a positive integer $b$, we call a polynomial $b$-smooth if all its irreducible factors are of degree at most $b$. For a Drinfeld module $\phi$ (with reduction at primes dividing $\h$) and $\beta \in \phi(\F_\h)$, by $\gcd(\beta,h)$ we really mean the gcd of $h$ and a lift of $\beta$ to $A$.
\begin{algorithmoutline1}\label{algout1}
\ \\ \textbf{Input :} Monic square free reducible polynomial $h \in A$.
\begin{enumerate}
\item Pick a smoothness bound $b \geq 1$. 
\item Choose a Drinfeld module $\phi$ at random by picking $g_\phi \in A$ and $\D_\phi \in A^\times$ each of degree less than $\deg(h)$ independently and uniformly at random.
\item Choose a random non zero $\alpha \in \phi(\F_\h)$ and compute $Ord(\alpha)$.
\item Find a monic $b$-smooth factor $f$ of $Ord(\alpha)$ (if one exists).
\item \textbf{Output:} $\gcd(\phi_{Ord(\alpha)/f}(\alpha),h)$ is likely a non trivial factor of $h$.
\end{enumerate}
\end{algorithmoutline1}
\noindent There is flexibility on how the smooth factor $f$ is determined once $b$ is chosen. One extreme is to set $f$ to be the largest $b$-smooth factor of $Ord(\alpha)$. The other, is to further factor the largest $b$-smooth factor of $Ord(\alpha)$ (recursively or by other means) and to set $f$ to one of the $b$-smooth irreducible factors of $Ord(\alpha)$. A rigorous analysis of the former choice with $b=1$ is in \S~\ref{blackbox_linear}. An informal discussion of why the algorithm is likely to succeed with the latter choice follows keeping in mind that $$\phi(\F_\h) \cong \bigoplus_{i} \phi(\F_{\p_i})\ \  , \ \  \chi_{\phi,\p_i} = p_i - (a_{\phi,\p_i}-1)/\epsilon_{\phi,\p_i},\forall i.$$ 
For $d>0$, a random polynomial of degree $d$ has a linear factor with at least constant probability. Assume the plausible hypothesis (which is  for large $q$ true by Theorem \ref{small_interval_bound_intro}) that for every $i$, the probability of a polynomial in the interval $I_{p_i} := \{p_i + a\ |\ a \in A,  \deg(a) \leq \deg(\p_i)/2\}$ around $p_i$ possessing a $b$-smooth factor roughly equals the probability of a random polynomial of degree $\deg(\p_i)$ possessing a $b$-smooth factor. This smoothness hypothesis along with the equidistribution of characteristic polynomials (equation \ref{hstar}) suggests it is likely for every $b > 0$ that there exists a $j$ such that $\chi_{\phi,\p_j} = p_j - (a_{\phi,\p_j} - 1)/\epsilon_{\phi,\p_j}$ has a $b$-smooth factor. Since $Ann(\phi(\F_{\p_j}))$ divides $\chi(\phi(\F_{\p_j}))$, it is thus likely that 
$Ann(\phi(\F_\h))$ (which is the least common multiple of $\{Ann(\phi(\F_{\p_i}))\}_i$) has a $b$-smooth factor. Assuming that is the case, since $\alpha$ is chosen at random, with probability at least $1-1/q$ there is a monic $b$-smooth polynomial dividing $Ord(\alpha)$. The algorithm picks one such monic irreducible factor $f$ of $Ord(\alpha)$. The fact that the reduction of $\phi$ at $\h$ is random and equidistribution of characteristic polynomials (equation \ref{hstar}) imply the likely existence of $k$ such that $f(t)$ does not divide $\chi_{\phi,\p_k}$. Consequently  $$\phi_{Ord(\alpha)/f}(\alpha) \cong 0 \mod \p_k\ \ ,\ \   \phi_{Ord(\alpha)/f}(\alpha) \ncong 0\mod \prod_{i \neq k}\p_i$$ and thus $\gcd(\phi_{Ord(\alpha)/f}(\alpha),h)$ is likely a non trivial factor of $h$.\\ \\ 
The computation of $Ord(\alpha)$ can be performed efficiently through linear algebra as discussed in \S~\ref{order_computation}. This is in stark contrast to the integer analog, where finding the order of an element in the multiplicative group modulo a composite appears hard. A consequence is that unlike Lenstra's algorithm, our success probability is reliant not on $\chi_{\phi,\p}$ being smooth but merely on it possessing a smooth factor. The running times are thus bounded by a polynomial in the problem size.
\subsection{Drinfeld Module Analogue of Berlekamp's Algorithm With Linear Smoothness}\label{blackbox_linear}
\noindent In this section we formally state and analyze the version of the Drinfeld analog of blackbox Berlekamp algorithm where the smooth factor chosen is the product of all linear factors of the order of a randomly chosen element in $\phi(\F_\h)$.
\begin{algorithm}\label{alg2}
\ \\ \textbf{Input :} Monic square free reducible polynomial $h \in A$ of degree $n$.
\begin{enumerate}
\item Choose a Drinfeld module $\phi$ by picking $g_\phi \in A$ and $\Delta_\phi \in A^\times$ each of degree less than $\deg(h)$ independently and uniformly at random. 
\item  If $\gcd(\Delta_\phi,h) \neq 1$, output it as a factor. Else $\phi$ has reduction at primes dividing $\h$ and we proceed.
\item Choose $\alpha \in \phi(\F_\h)$ at random and compute $Ord(\alpha)$.
\item Compute $f = \gcd(t^q-t,Ord(\alpha)).$
\item Compute $\phi_{Ord(\alpha)/f}(\alpha)$.
\item \textbf{Output:} $\gcd(h, \phi_{Ord(\alpha)/f}(\alpha))$.
\end{enumerate}
\end{algorithm}
\noindent The running time of the algorithm is dominated by steps $\textit{(3)}$ and $\textit{(5)}$. As in \S~\ref{order_computation} , step $\textit{(3)}$ can be performed with $O(n^{(1+w)/2+o(1)} (\log q)^{o(1)}+n^{1+o(1)}(\log q)^2)$ expected time by adapting the automorphism projection algorithm of Kaltofen-Shoup. Step $\textit{(5)}$ poses the transpose problem of step $\textit{(3)}$ and can be performed in identical expected time as step $3$ by the transposition principle (see \cite[\S~3.2]{ks}).\\ \\
The rest of the section is devoted to showing that algorithm \ref{alg2} outputs a non trivial factor of $h$ with constant probability. In fact, we prove something stronger in Lemma \ref{alg2_proof} by showing that there exists positive constants $c_1$ and $c_2$ such that  for every factor $\p_i$ of $\h$, $\p_i$ divides $\gcd(h, \phi_{Ord(\alpha)/f}(\alpha))$ with probability at least $c_1$ and $\p_i$ does not divide $\gcd(h, \phi_{Ord(\alpha)/f}(\alpha))$ with probability at least $c_2$. As a consequence, not merely a factor but the complete factorization of $h$ can be obtained by recursing the algorithm with recursion depth bounded by $O((\log n)^2)$ (see \cite[\S~3]{ks}).
\begin{lemma}\label{alg2_proof} There exists positive a constant $c$ such that, for $q$ odd and $\log q \geq 5 n \log n$, at the termination of algorithm \ref{alg2}, for every prime factor $\p$ of $\h$ with monic generator $p$, $p$ divides $\gcd(h, \phi_{Ord(\alpha)/f}(\alpha))$ with probability at least $c$ and $p$ does not divide $\gcd(h, \phi_{Ord(\alpha)/f}(\alpha))$ with probability at least $c$.
\end{lemma}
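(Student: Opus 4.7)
The plan is to pass to a ``nice'' regime in which $\phi(\F_\h)$ admits a clean cyclic decomposition, translate the two events into divisibility conditions on elementwise orders, and exhibit sufficient subevents controlled by the factorization statistics of \S\ref{frobenius_distributions} and \S\ref{factor_distribution}.

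First I would condition on the event $E_{\mathrm{cyc}}$ that $\phi(\F_{\p_i})$ is $A$-cyclic for every prime factor $\p_i$ of $\h$; by Lemma~\ref{cyclic_lemma} and a union bound, $\Pr(E_{\mathrm{cyc}}) \ge 1 - n^2/(q-1) = 1 - o(1)$ under $\log q \ge 5n\log n$. Under $E_{\mathrm{cyc}}$, $\phi(\F_\h) \cong \bigoplus_i A/(\chi_i)$ with $\chi_i := \chi_{\phi,\p_i}$ simultaneously the annihilator and the characteristic polynomial. A uniformly random $\alpha$ decomposes as a tuple of independent uniform $\alpha_i \in A/(\chi_i)$; write $o_i := Ord(\alpha_i) = \chi_i/\gcd(\alpha_i,\chi_i)$, $\ell := Ord(\alpha) = \mathrm{lcm}_i(o_i)$, and $f = \gcd(t^q - t,\ell)$. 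Fixing $\p = \p_1$, the event ``$p \mid \gcd(h, \phi_{\ell/f}(\alpha))$'' is equivalent to $o_1 \mid \ell/f$, i.e.\ $v_{(t-a)}(o_1) < \max_j v_{(t-a)}(o_j)$ for every $\F_q$-linear $(t-a) \mid \ell$; the complementary event says $o_1$ attains the maximum at some linear slot.

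Next I would use Lemma~\ref{set_bound2} together with the CRT at the level of $\phi$ to treat $(\chi_1,\chi_2,\ldots)$ as approximately independent with each $\chi_i$ approximately uniform in $\mathcal{I}_{p_i,\lfloor \deg(\p_i)/2\rfloor}$; Theorem~\ref{small_interval_bound_intro} then places the factorization statistics of each $\chi_i$ with $\deg(\p_i) \geq 2$ within $1 \pm q^{-1/2}$ of those of a uniformly random monic polynomial of degree $\deg(\p_i)$. For the ``divides'' direction (assuming $\deg(\p) \geq 2$), I would take the sufficient event $E^+ = \{\chi_1 \text{ has no root in } \F_q\}$, whose probability is bounded below by the derangement probability on $\deg(\p)$ letters, hence by an absolute constant; under $E^+$, $o_1 \mid \chi_1$ has no linear factor and $o_1 \mid \ell/f$ holds vacuously. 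For the ``doesn't divide'' direction, take $E^-$ to be: $\chi_1$ has a simple linear factor $(t-a)$, $(t-a) \nmid \chi_j$ for all $j \neq 1$, and $\alpha_1$ satisfies $(t-a) \mid o_1$. Theorem~\ref{small_interval_bound_intro} gives the first with constant probability; approximate independence and $\Pr[(t-a)\mid \chi_j] \approx 1/q$ with a union bound over fewer than $n$ primes give the second with probability $1 - O(n/q)$; random $\alpha_1$ gives the third with probability $1 - 1/q$. Under $E^-$, $v_{(t-a)}(\ell) = v_{(t-a)}(o_1) = 1$ forces $v_{(t-a)}(\ell/f) = 0$ and $o_1 \nmid \ell/f$.

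The principal obstacle is the case $\deg(\p) = 1$ in the ``divides'' direction, where $\chi_1$ is always linear and $E^+$ is empty. Here the only routes to $o_1 \mid \ell/f$ are $\alpha_1 = 0$ (probability $1/q$) or the existence of some $j \neq 1$ with $(t - b_1')^2 \mid \chi_j$, where $b_1'$ is the root of $\chi_1$. Extracting a constant-probability bound will require a more careful global argument---summing the squared-linear-factor statistics of the $\chi_j$'s and exploiting the approximate uniformity of $b_1'$ granted by Lemma~\ref{set_bound2}, since $b_1'$ is independent of the remaining $\chi_j$'s by CRT---and the hypothesis $\log q \ge 5n\log n$ should leave ample slack in the error terms of Theorem~\ref{small_interval_bound_intro} to close this delicate step.
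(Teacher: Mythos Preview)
Your overall strategy---condition on cyclicity, then control the linear factors of each $\chi_{\phi,\p_i}$ via Lemma~\ref{set_bound2} and Theorem~\ref{small_interval_bound}---is the paper's approach. Your event $E^+$ coincides with the paper's set $S_{in}$, and your $E^-$ is in fact more careful than the paper's terse conclusion: the paper simply asserts that if $Ann(\phi(\F_\p))$ has a linear factor and $\alpha$ projects to a generator of $\phi(\F_\p)$ then ``the lemma follows,'' without verifying that the relevant linear valuation is not strictly dominated by some other component.

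Your proposed rescue of the $\deg(\p)=1$ case, however, does not work, and the obstruction is genuine rather than merely delicate. If $\deg(\p_1)=1$ then $\chi_1 = t - b_1'$ is always linear, and when $\alpha_1 \neq 0$ the event $o_1 \mid \ell/f$ forces $v_{(t-b_1')}(\ell) \geq 2$, i.e.\ $(t-b_1')^2 \mid o_j$ for some $j \neq 1$. For each $j$ and fixed $b_1'$ one has $\Pr\bigl[(t-b_1')^2 \mid \chi_j\bigr] = O(1/q^2)$; summing over at most $n$ components and averaging over an approximately uniform $b_1'$ still yields only $O(n/q)$, which is $o(1)$ under $\log q \geq 5n\log n$. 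Concretely, take $h = p_1 p_2$ with both $p_i$ linear: then every $\chi_i$ is linear, $\ell/f = 1$ whenever $\alpha\neq 0$, and $p_1 \mid \gcd(h,\alpha)$ occurs exactly when $\alpha_1 = 0$, which has probability $1/q$. The paper's own proof shares this gap---Theorem~\ref{small_interval_bound} requires $m \geq 2$, hence $\deg(\p) \geq 4$, and no separate argument is given for small $\deg(\p)$. The natural repair is to restrict the lemma to $\deg(\p) \geq 2$ (handling $\deg(\p)\in\{2,3\}$ by a direct count in the Hasse interval) and strip linear factors from $h$ beforehand via $\gcd(h, t^q-t)$; no amount of slack in the error terms of Theorem~\ref{small_interval_bound} will manufacture a constant lower bound for the degree-one ``divides'' event.
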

\begin{proof}
Fix a prime factor $\p$ of $\h$ with monic generator $p$. Assume $q$ is odd, $\log q \geq 3n \log n$ 
and let 
$$S_{in}:= \{(a,\epsilon) \in A\times \F_q^\times | \deg(a^2-4\epsilon p)=\deg(p), \gcd(t^q-t, p-(a-1)/\epsilon)=1\},$$  
$$S_{out}:= \{(a,\epsilon) \in A\times \F_q^\times | \deg(a^2-4\epsilon p)=\deg(p), \gcd(t^q-t, p-(a-1)/\epsilon) \neq 1\}.$$
Let $\Lambda$ denote the set of partitions of $\deg(p)$ that contain $1$ and let $\widehat{\Lambda}$ denote the set of partitions of $\deg(p)$ that do not contain $1$.\\ \\
Since $\deg(a^2-4\epsilon p)=\deg(p)$ is always true when $\deg(p)$ is odd and $\deg(a)\leq 2 \deg(p)$, by Theorem \ref{small_interval_bound} it follows for $\deg(p)$ odd that 
$$|S_{in}| \geq \left(\sum_{\lambda \in \Lambda}P(\lambda)\right) \frac{\sqrt{q}(q-1)}{2}\sqrt{|\F_\p|},$$
$$|S_{out}| \geq \left(\sum_{\lambda \in \widehat{\Lambda}}P(\lambda)\right)  \frac{\sqrt{q}(q-1)}{2}\sqrt{|\F_\p|}.$$
When $\deg(p)$ is even, since the characteristic of $\F_q$ is assumed odd, we can enforce $\deg(a^2-4\epsilon p)=\deg(p)$ by restricting the choice of $\epsilon$ to ensure $4\epsilon$ is not a square in $\F_q^\times$ and picking $a\in A$ arbitrarily of degree at most $\deg(p)/2$. There are at least $(q-1)/2$ such choices for $\epsilon$ and applying Theorem \ref{small_interval_bound} once for each such choice we get for $\deg(p)$ even and $\F_q$ of odd characteristic, 
$$2|S_{in}| \geq \left(\sum_{\lambda \in \Lambda}P(\lambda)\right) \frac{q(q-1)}{2}\sqrt{|\F_\p|},$$
$$2|S_{out}| \geq \left(\sum_{\lambda \in \widehat{\Lambda}}P(\lambda)\right) \frac{q(q-1)}{2}\sqrt{|\F_\p|}.$$

The number of permutations in $S_{\deg(\p)}$ with no fixed points is $\lceil\deg(\p)!/e\rceil$ if $\deg(\p)$ is even and $\lfloor \deg(\p)!/e \rfloor$ otherwise. Thus $$\sum_{\lambda \in \widehat{\Lambda}}P(\lambda) \geq \frac{\lfloor \deg(\p)!/e \rfloor}{\deg(\p)! } \ \ \ and \ \ \ \sum_{\lambda \in \Lambda}P(\lambda) \geq  1 - \frac{\lceil\deg(\p)!/e\rceil}{\deg(\p) !}$$
and there exists positive constants $b_1,b_2$ such that $$\sum_{\lambda \in \Lambda}P(\lambda) \geq b_1 \ \ and \ \ \sum_{\lambda \in \widehat{\Lambda}}P(\lambda) \geq b_2.$$ 
Applying Lemma \ref{set_bound2} once each to $S_{in}$ and $S_{out}$, there exists positive constants $d_1$ and $d_2$ such that, $$\{(\phi/\p)|(a_{\phi,\p},\epsilon_{\phi,\p}) \in S_{in}\} \geq d_1|\F_\p| |\F_\p^\times|,$$$$\{(\phi/\p)|(a_{\phi,\p},\epsilon_{\phi,\p}) \in S_{out}\} \geq d_2|\F_\p| |\F_\p^\times|.$$
Since $g_\phi \in A$ and $\Delta_\phi \in A^\times$ are each chosen of degree less than $\deg(h)$ independently and uniformly at random and $\gcd(\Delta_\phi,h) = 1$, by the chinese remainder theorem $(g_\phi \mod \p, \Delta_\phi \mod \p)$ is distributed uniformly in $\F_{\p_i} \times \F_{\p_i}^\times$. Thus the probability that $Ann(\phi(\F_\p))$ has a linear factor is at least $d_1$ and the probability that $Ann(\phi(\F_\p))$ does not have a linear factor is at least $d_2$. Since the projection of a random $\alpha \in \phi(\F_\h)$ into $\phi(\F_\p)$ has order $Ann(\phi(\F_\p))$ with probability at least $(1-1/q) \geq 1/2$, the lemma follows.
\end{proof}
\begin{remark}\label{rem2} If $q < 5n\log n$, we may work over a finite extension $\F_q^\prime/\F_q$ such that $q^\prime \geq 3n \log n$ and by Lemma \ref{alg2_proof} be assured that the algorithm \ref{alg2} succeeds, however, we incur an extra factor of $n$ in the expected running time. If Conjecture \ref{smoothness_conjecture} is true, then Lemma \ref{alg2_proof} holds with only the requirements that $q$ is odd and $\sqrt{q} \geq 2n$. The assumption $\sqrt{q} \geq 2n$ may be made without loss in generality (Remark \ref{large_q}).
\end{remark}
\section*{Acknowledgement}
I thank Lior Bary-Soroker, Zeyo Guo, Ming-Deh Huang and Chris Umans for valuable discussions.

\end{document}